\def\hang{\hangindent\parindent}
 \def\rf{\par\noindent\hang}
\newtheorem{theorem}{Theorem}
\newtheorem{lemma}{Lemma}
\newcommand{\bX}{\boldsymbol{X}}
\newcommand{\bx}{\boldsymbol{x}}
\newcommand{\bY}{\boldsymbol{Y}}
\newcommand{\bZ}{\boldsymbol{Z}}
\newcommand{\bU}{\boldsymbol{U}}
\newcommand{\btheta}{\boldsymbol{\theta}}
\newcommand{\ntheta}{||\btheta||}
\newcommand{\aT}{a(T)}
\newcommand{\at}{a(t)}
\newcommand{\bT}{b(T)}
\newcommand{\bt}{b(t)}
\newcommand{\aplsT}{a^{+}(T)}
\newcommand{\aplst}{a^{+}(t)}
\newcommand{\subsup}[1]{\mbox{\footnotesize $#1$}}
\newcommand{\pkstyle}[1]{\mbox{\footnotesize $#1$}}
\begin{document}

\baselineskip=17pt


\noindent {\LARGE {\bf Optimized recentered confidence spheres for the multivariate normal mean}}

\vspace{1cm}



\noindent {\Large{Waruni Abeysekera, Paul Kabaila$^*$}}

\bigskip

\noindent
{\sl Department of Mathematics and Statistics, La Trobe University, Victoria 3086, Australia}

\vspace{0.5cm}

\noindent  \textbf{ABSTRACT}

\medskip

\noindent Casella and Hwang, 1983, {\sl JASA}, introduced a broad class of recentered confidence
spheres for the mean $\btheta$
of a multivariate normal distribution with covariance matrix $\sigma^2 \boldsymbol{I}$, for $\sigma^2$
known. Both the center and radius functions of these confidence spheres are flexible functions of
the data.
For the particular case of confidence spheres centered on the positive-part James-Stein estimator and with radius determined by empirical Bayes
considerations, they show numerically that these confidence spheres have the desired minimum coverage probability $1-\alpha$ and dominate the
usual confidence sphere in terms of scaled volume. We shift the focus from the scaled volume to the scaled {\sl expected} volume of the recentered confidence
sphere. Since both the coverage probability and the scaled expected volume are functions of the Euclidean norm of $\btheta$,
it is feasible to optimize the performance of the recentered confidence sphere by numerically computing both the
center and radius functions so as to optimize some clearly specified criterion.
We suppose that we have uncertain prior information that $\btheta = \boldsymbol{0}$. This motivates us to determine
the center and radius functions of the confidence sphere by numerical minimization of the scaled expected volume of the confidence sphere at $\btheta = \boldsymbol{0}$, subject to the constraints that (a) the coverage probability never falls below $1-\alpha$ and (b) the radius never exceeds
the radius of the standard $1-\alpha$ confidence sphere. Our results show that, by focusing on this clearly specified criterion,
significant gains in performance (in terms of this criterion) can be achieved.
We also present analogous results for the
much more difficult case that $\sigma^2$ is unknown.

\bigskip

\noindent {\sl Keywords:} Confidence set; Multivariate normal
mean; Recentered confidence sphere;
Uncertain prior information.

\vspace{0.5cm}


\noindent  {\large $^*$} Correspondence to:
Department of Mathematics and Statistics, La Trobe University, Victoria 3086, Australia.

\smallskip

\noindent {\it E-mail addresses:} abeysekera.w@wehi.edu.au (W. Abeysekera), P.Kabaila@latrobe.edu.au (P. Kabaila)

\newpage

\noindent {\large{\bf 1. Introduction}}

\medskip

\noindent Suppose that $\bX = \left( X_1,...,X_p \right) \sim N\left( \btheta, \sigma^2 \boldsymbol{I} \right)$
where
$\btheta = \left( \theta_1,...,\theta_p \right)$ and
$\boldsymbol{I}$ denotes the $p \times p$ identity matrix ($p \ge 3$).
Stein (1962) presents arguments that suggest that, for $\sigma^2$ known, a confidence sphere centered on the
positive-part James-Stein estimator and with the same radius as the standard $1-\alpha$ confidence sphere
for $\btheta$ dominates the standard $1-\alpha$ confidence sphere in terms of coverage probability.
This remarkable suggested result (proved later by Hwang and Casella, 1982)
mirrors the earlier results on the point estimation of $\btheta$ known as Stein's paradox.

Casella and Hwang (1983, Section 3)
introduce a broad class of recentered confidence
spheres for $\btheta$, for $\sigma^2$
known. Both the center and radius functions of these confidence spheres are flexible functions of
the data.
For the particular case of confidence spheres centered on the positive-part James-Stein estimator and with radius determined by empirical Bayes
considerations, they show numerically that, for sufficiently large $p$, these confidence spheres have the desired minimum coverage probability $1-\alpha$ and dominate the
usual confidence sphere in terms of the scaled volume.
For $\sigma^2$ known, Samworth (2005) also considered a recentered confidence sphere (RCS)
with center at the positive-part James-Stein estimator.
However, he determines the radius function using either a Taylor series or the bootstrap.
He shows numerically that these confidence spheres have the desired minimum coverage probability $1-\alpha$ for sufficiently large $p$ and dominate the usual confidence sphere in terms of the $p$'th root of the scaled volume.
In common with much of the existing literature,
we first consider the case that $\sigma^2$ is known. Later, we consider the more difficult case that
$\sigma^2$ is unknown.

\textbf{Suppose that $\boldsymbol{\sigma^2}$ is known}.
We shift the focus from the scaled volume (or its $p$'th root) to the scaled {\sl expected} volume of the RCS.
Scaled expected length has been profitably used in related problems and to resolve a paradox in decision-theoretic
interval estimation (Farchione and Kabaila, 2008, Kabaila and Giri, 2009,
Kabaila and Tissera, 2014, and Kabaila, 2013).
Casella, Hwang and Robert (1993) show that a confidence interval for the univariate normal mean that is obtained by minimizing the posterior expected loss, for the prior distribution and the risk function that they specify, has paradoxical properties. 
Kabaila (2013) shows that these paradoxical properties disappear when the expected length term in this risk function is replaced by the scaled expected length.
Since both the coverage probability and the scaled expected volume of the RCS are functions of $\| \btheta \|$,
it is feasible to optimize the performance of the RCS by numerically computing both the
center and radius functions so as to optimize some clearly specified criterion, subject to coverage and radius constraints.
By contrast, a goal of seeking to minimize
(in some sense) the scaled volume of the recentered confidence sphere for the most probable values of
$\bX$ when $\btheta = \boldsymbol{0}$, subject to the coverage constraint, is problematic (Casella and Hwang, 1986).

Casella and Hwang (1987) argue cogently that the confidence set for $\btheta$ should be
tailored to the uncertain prior information available about $\btheta$.
We suppose that we have uncertain prior information that $\btheta = \boldsymbol{0}$. 
Hodges and Lehmann 
(1952) propose, quite broadly, the utilization of uncertain prior information in frequentist inference. Our utilization of the uncertain 
prior information that $\btheta = \boldsymbol{0}$ is also frequentist.
This uncertain prior information motivates us to determine
the center and radius functions of the RCS by numerical minimization of the scaled expected volume of the confidence sphere at $\btheta = \boldsymbol{0}$, subject to the constraints that (a) the coverage probability never falls below $1-\alpha$ and (b) the radius never exceeds
the radius of the standard $1-\alpha$ confidence sphere (centered on $\bX$).
The numerical results in Section 2
show that, by focusing on the clearly specified criterion of the scaled expected  volume of the confidence sphere at $\btheta = \boldsymbol{0}$,
significant gains in performance (in terms of this specified criterion)
can be achieved.

Of course, our approach requires the use of a computationally convenient formula for the coverage probability of the RCS.
Such a formula is derived by Casella and Hwang (1983, Section 3), for $p$ odd. To be able to compute the coverage probability also for
$p$ even, we derive a new computationally convenient formula for the coverage probability of the RCS that is applicable for both even and odd $p$.
The coverage constraint is implemented in the computations by requiring that this constraint is satisfied for a judiciously chosen finite set of
values of $\| \btheta \|$. To show that a given finite set is adequate to the task, we simply check that at the completion of the computations
of the optimized RCS, the coverage probability constraint is satisfied for all $\| \btheta \|$.
For computational feasibility, we also need to choose parametric forms for the center and radius functions.
This choice is by no means obvious and, as described in Section 2
(see, particularly, Remark 2.1), requires a great deal of care.

A natural requirement for any confidence set for $\btheta$ is that this it is 
{\sl rotationally symmetric}. The optimized RCS's that we compute satisfy this requirement. Efron (2006) provides an elegant description of any rotationally symmetric confidence set in terms of his `inclusion function'. 
This is a function of only two variables: $\| \btheta \|$ and $\| \bx \|$.
In Section 3, we compare the graphs of the inclusion functions for (a) the standard confidence sphere, (b) the RCS of Casella and Hwang (1983) and (c) the optimized RCS.

\textbf{Now consider the more difficult case that $\boldsymbol{\sigma^2}$ is unknown}.
Suppose that we have additional data that provides the estimator $S^2$ for $\sigma^2$,
where $m S^2/\sigma^2 \sim \chi^2_m$ and $S^2$ and $\bX$ are independent.
In the related context that there is uncertain prior information that $\theta_1 = \theta_2 = \dots = \theta_p$,
Casella and Hwang (1987) put forward an RCS
with center at an analogue of the positive-part James-Stein estimator (which is defined for $\sigma^2$ known)
and radius that is an analogue of the radius based on empirical Bayes considerations for $\sigma^2$ known.

In Section 4, we describe a class of RCS's that are an analogue, for $\sigma^2$ unknown, of the broad class of
RCS's described by Casella and Hwang (1983), Section 3, for $\sigma^2$ known.
Both the coverage probability and the scaled expected volume of the RCS's in this class are functions of $\gamma = \| \btheta \|/\sigma$.
As before, suppose that we have uncertain prior information that $\btheta = \boldsymbol{0}$.
Again, this motivates us to determine
the center and radius functions of the RCS by numerical minimization of the scaled expected volume of the confidence sphere at $\btheta = \boldsymbol{0}$, subject to the constraints that (a) the coverage probability never falls below $1-\alpha$ and (b) the radius never exceeds
the radius of the standard $1-\alpha$ confidence sphere (centered on $\bX$).
The numerical results in Section 4 show that, by focusing on the clearly specified criterion of the scaled expected  volume of the confidence sphere at $\btheta = \boldsymbol{0}$,
significant gains in performance can be achieved, by comparison with the RCS centered on the analogue of the positive-part James-Stein
estimator.

\bigskip

\noindent {\large{\bf 2. Results for $\boldsymbol{\sigma^2}$ known.
Comparison of the
performances of the optimized RCS and the RCS of Casella and Hwang (1983, Section 4).}}

\medskip
In this section, we suppose that $\sigma^2$ is known.
Without loss of generality, we assume that $\sigma^2 = 1$.
The standard $1-\alpha$ confidence set for $\btheta$ is
$I = \{\btheta : \, \| \btheta-\bX \| \leq d \}$,
where the positive number $d$ satisfies
$P\big( Q \leq d^2 \big) = 1-\alpha$
for $Q \sim \chi_p^2$.
Casella and Hwang (1983, Section 3), define a class of RCS's that can be
expressed in the form
\begin{equation*}
J(a,b) = \big\{ \btheta : \| \aT \bX - \btheta \| \leq \bT \big\},
\end{equation*}
where $a:[0, \infty) \rightarrow (0, \infty)$, $b:[0, \infty) \rightarrow (0, \infty)$ and
$T = \| \bX \|/\sqrt{p}$.
This notation for the RCS is slightly different
from that used by Casella and Hwang (1983),
who express this RCS in
terms of $\| \bX \|$. This makes no essential difference.
This choice of center and radius has some intuitive appeal, since
$T = \|\bX\|/\sqrt{p}$ may be viewed as a test statistic for testing the null hypothesis
that $\btheta = \boldsymbol{0}$ against the
alternative hypothesis that $\btheta \ne \boldsymbol{0}$.
We assess the RCS $J(a,b)$ using both its coverage probability an its
scaled expected volume, which is defined
to be the ratio (expected volume of the RCS) / (volume of $I$).

Casella and Hwang (1983, Section 3), derive a computationally convenient formula for the coverage
probability of $J(a,b)$ that is applicable for $p$ odd.
Let $\gamma = \|\btheta \|$.
In Appendix
A, we show that
the coverage probability of $J(a,b)$ is, for given functions $a$ and $b$, a function of
$\gamma$ and we
derive a new computationally convenient formula for this coverage probability that is applicable for any $p$ (even or odd).
Details of the numerical evaluation of this coverage probability, using these computationally convenient formulas,
are also presented in Appendix A. The numerical results for coverage probabilities that are presented in this section
were found using this new computationally convenient formula.

Define $a^+:[0, \infty) \rightarrow (0, \infty)$ by the requirement that
$a^+(T) \bX$  is the positive-part James-Stein estimator. This implies that
\begin{equation*}
a^{+}(x)=\max \left\{ 0, 1-\left(1-\frac{2}{p}\right)\frac{1}{x^2} \right\}.
\end{equation*}
The specific proposal for an RCS that is given in Section 4 of Casella and Hwang (1983)
is $J(a^+,b^*)$, where $b^*$ is determined by empirical Bayes considerations.
For $x \in [0, d/\sqrt{p}]$, 
\begin{equation*}
b^*(x) =
\sqrt{\{ 1-(p-2)/d^2 \}[ d^2 - p \, \log \{ 1-(p-2)/d^2 \}]}  
\end{equation*}
and, for $x > d/\sqrt{p}$,
\begin{equation*}
b^*(x) =
\sqrt{\{ 1-(p-2)/(p \, x^2) \} [ d^2 - p \, \log \{ 1-(p-2)/(p \, x^2) \} ]}. 
\end{equation*}

We define the scaled expected volume of $J(a,b)$ to be the ratio
\begin{equation}
\label{sev_initial}
\frac{E_{\btheta} \{\text{volume of } J(a,b)\}}{\text{volume of } I}
= E_{\btheta} \left\{ \left(\frac{b(T)}{d} \right)^p \right\},
\end{equation}
since the volume of a sphere in $\mathbb{R}^p$ with radius $r$ is $2 \, r^p \, \pi^{p/2}/\big \{p \, \Gamma(p/2)\big \}$.
In Appendix A, we show that this is a function of $\gamma = \|\btheta \|$, for given function $b$. We also
derive a new computationally convenient formula for this scaled expected volume.
To find the optimized RCS, we require
that the functions $a$ and $b$ satisfy the following conditions.

\smallskip

\noindent \underbar{Condition A} \ $a:[0, \infty) \rightarrow (0, \infty)$ is a continuous
nondecreasing function
that satisfies $a(x)=a^{+}(x)$ for all $x \geq k$, where $a^+(T) \bX$  is the positive-part James-Stein estimator
and $k$ is a (sufficiently large) specified positive number.

\smallskip

\noindent \underbar{Condition B} \ $b:[0, \infty) \rightarrow (0, \infty)$ is a continuous
nondecreasing function
that satisfies $b(x)=d$ for all $x \geq k$.

\smallskip

In addition, for computational feasibility, we specify the following parametric forms for these
functions.
\begin{enumerate}
	
	\item
	
	Suppose that $x_1,\dots,x_{q_1}$ satisfy $0 = x_1 < x_2 < \dots < x_{q_1} = k$. The function $a$ is fully
	specified by the vector $a(x_1),\dots,a(x_{q_1})$ as follows. The value of $a(x)$ for any given
	$x \in [0, k]$ is found by piecewise cubic Hermite polynomial interpolation
	for these given function values. We call $x_1, \dots,x_{q_1}$ the knots
	of this piecewise cubic Hermite polynomial.
	
	\item
	
	Suppose that $y_1,\dots,y_{q_2}$ satisfy $0 = y_1 < y_2 < \dots < y_{q_2} = k$. The function $b$ is fully
	specified by the vector $b(y_1),\dots,b(y_{q_2})$ as follows. The value of $b(y)$ for any given
	$y \in [0, k]$ is found by piecewise cubic Hermite polynomial interpolation
	for these given function values. We call $y_1, \dots,y_{q_2}$ the knots
	of this piecewise cubic Hermite polynomial.
	
\end{enumerate}

For judiciously-chosen values of $k$ and these knots, we compute the functions $a$ and $b$, which take
these parametric forms, are nondecreasing and are
such that (a) the scaled expected
volume evaluated at $\btheta = \boldsymbol{0}$ (i.e. at $\gamma= 0$) is minimized
and (b) the coverage
probability of $J(a,b)$ never falls below $1-\alpha$.
All of the computations presented in the present paper were performed using
programs written in MATLAB using the Statistics and
Optimization toolboxes.
Piecewise cubic Hermite interpolation (Fritsch and Carlson, 1980)
is implemented in the pchip function in MATLAB.

The coverage constraint is implemented in the
computations as follows. For any reasonable choice of the functions $a$ and $b$,
the coverage probability of $J(a,b)$ converges to $1-\alpha$ as $\gamma \rightarrow \infty$. The
constraints implemented in the computations are that the coverage probability of $J(a,b)$ is
greater than or equal to $1-\alpha$ for every $\gamma$ in a judiciously-chosen finite set of
values. That a given finite set of values of $\gamma$ is adequate to the task is judged by checking
numerically,
at the completion of computations, that the coverage probability constraint is satisfied
for all $\gamma \ge 0$.

For $1-\alpha = 0.95$, we compare the coverage probability and scaled expected volume
of the optimized RCS with $J(a^+, b^*)$, the RCS of Casella and Hwang (1983, Section 4).
We chose the knots of $a$ and $b$ that allow these functions
to provide good approximations to $a^+$ and $b^*$, respectively. In this way,
we sought to ensure that $J(a,b)$ could perform at least as well as $J(a^+, b^*)$
in terms of minimizing the scaled expected volume at $\gamma = 0$, subject to the
coverage and radius constraints.
Some exploratory computations led us to choose $k=10$ and the following knots for
$a$ and $b$.
Since $a^{+}(x)=0$ for $0 \leq x \leq \sqrt{1-(2/p)}$,
we place the first two knots of the function $a$ at $0$ and $\sqrt{1-(2/p)}$.
The next three knots of $a$ are at $\sqrt{1-(2/p)}+(\tau/10)$, $\sqrt{1-(2/p)}+(2\tau/10)$ and $\sqrt{1-(2/p)}+(4\tau/10)$,
where $\tau= (k/2)-\sqrt{1-(2/p)}$. The remaining knots of $a$ are at $k/2$, $3k/4$ and $k$.
Since $b^{+}(x)$ is a constant for $0 \leq x \leq d/\sqrt{p}$,
we place the first two knots of the function $b$ at $0$ and $d/\sqrt{p}$.
The next two knots of $b$ are at $d/\sqrt{p}+(\xi/3)$ and $d/\sqrt{p}+(2\xi/3)$,
where $\xi= (k/2)-d/\sqrt{p}$. The remaining knots of $b$ are at $k/2$, $3k/4$ and $k$.
The optimized RCS was computed for each $p \in \{3,4,\dots,13,20,25\}$.

The coverage constraint was implemented in the computations by requiring that the
coverage probability of $J(a,b)$ is greater than or equal to $1-\alpha$ for all
$\gamma \in \{0,1,2,\dots,64,65\}$. This was shown to be adequate to the task by checking
numerically, at the completion of the computation of the optimized RCS,
that the coverage probability constraint
is satisfied for all $\gamma \ge 0$.

Figure \ref{Figure1} shows that, for $p=3$, the coverage probability of the optimized RCS is no less than $0.95$ for all
$\gamma$, while the coverage probability of $J(a^+, b^*)$, the RCS of Casella and Hwang (1983, Section 4),
is slightly below $0.95$ for some values of $\gamma$. This figure also shows that, for $p=3$, the
scaled expected volume of the optimized RCS is substantially
less than the scaled expected volume of $J(a^+, b^*)$, at $\btheta = \boldsymbol{0}$.
The top two panels of this figure suggest the following from the point of view of minimizing the 
scaled expected volume at $\btheta = \boldsymbol{0}$, subject to the coverage and other constraints.
The shrinkage towards the origin of the center of the RCS of Casella and Hwang (the positive-part James-Stein estimator) is too severe
for small $x$, requiring that the radius of this RCS must be unhelpfully large.

Of course, our optimized RCS does not dominate this RCS of Casella and Hwang. Our optimized RCS has smaller
scaled expected volume for $\gamma$ close to 0. However, it has larger scaled expected volume for $\gamma$
not close to 0.
Table \ref{Table1} presents the comparison of
the minimum coverage probability and the scaled expected volume at $\btheta = \boldsymbol{0}$
of the optimized RCS and $J(a^+, b^*)$ for $p \in \{3,4,\dots,13,20,25\}$.
According to this table, the optimized RCS always achieves a coverage probability greater than or equal to $0.95$,
while $J(a^+, b^*)$, the RCS of Casella and Hwang (1983, Section 4), does not achieve this for $p \leq 6$.
 Also, for every value of $p$ considered, $J(a,b)$ achieves a substantially lower scaled expected volume at
$\btheta = \boldsymbol{0}$ than $J(a^{+},b^*)$. In summary, our optimized RCS compares favourably with that of
Casella and Hwang (1983, Section 4), in terms of both the minimum coverage probability and the
scaled expected volume at $\btheta = \boldsymbol{0}$.

\smallskip
{\sl Remark 2.1:} When we initially considered the construction of optimized RCS's for
$\btheta$, we set $a(x) = 1$ for all $x \ge k$. This seemed a very reasonable choice that leads to
$J(a,b)$ coinciding with the standard $1-\alpha$ confidence set $I$ when
$\|\bX\| \ge k$. Surprisingly, the computation of the nondecreasing functions $a$ and $b$ such that the
scaled expected volume at $\gamma = 0$ was minimized, subject to the coverage probability of $J(a,b)$
never falling below $1-\alpha$, always resulted in a $J(a,b)$ that was, within computational accuracy,
equal to $I$. A careful investigation (Abeysekera, 2014) revealed that the explanation for this phenomenon is that for all
$J(a,b)$'s, other than those very close to $I$, there was a small dip (over a narrow interval of values of
$\gamma$) in the coverage probability below $1-\alpha$. As $k$ is increased, this dip becomes less pronounced,
but appears to never disappear entirely. In other words, it did not seem possible for $J(a,b)$ to satisfy
the coverage constraint unless it was, within computational accuracy, equal to $I$. We found the following
solution to this problem. If, instead of setting $a(x) = 1$ for all $x \ge k$, we set
$a(x) = a^+(x)$ for all $x \ge k$, then this phenomenon does not occur.

%
\begin{figure}[h!]
	\centering
	\begin{tabular}{c}
		\includegraphics[width=0.8\textwidth]{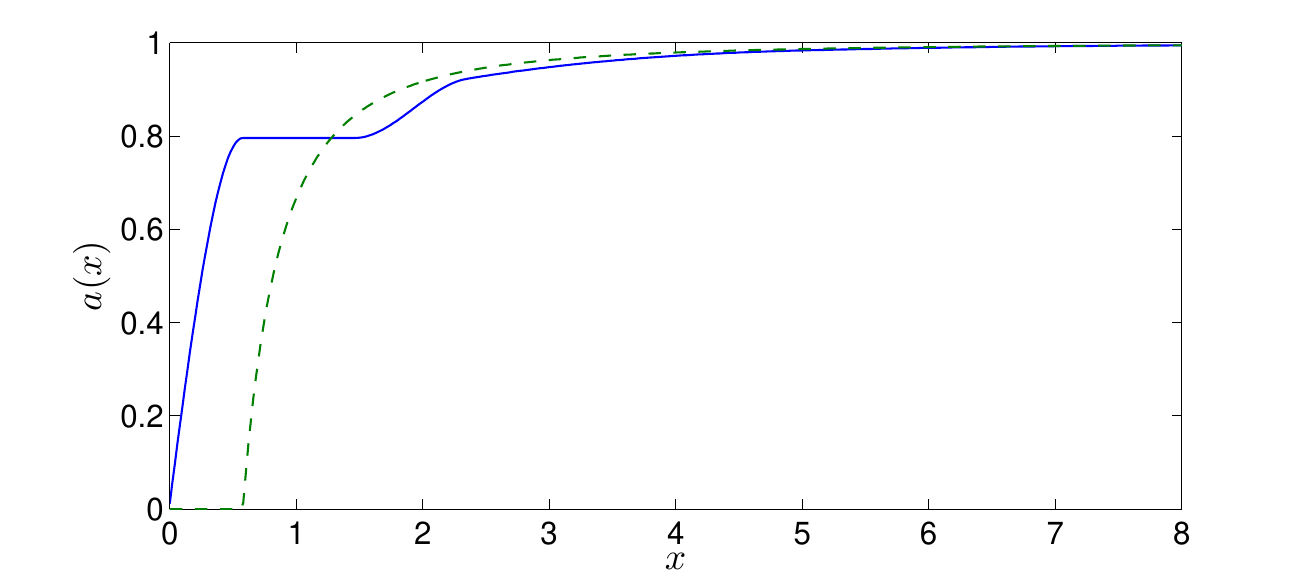} \\
		\includegraphics[width=0.8\textwidth]{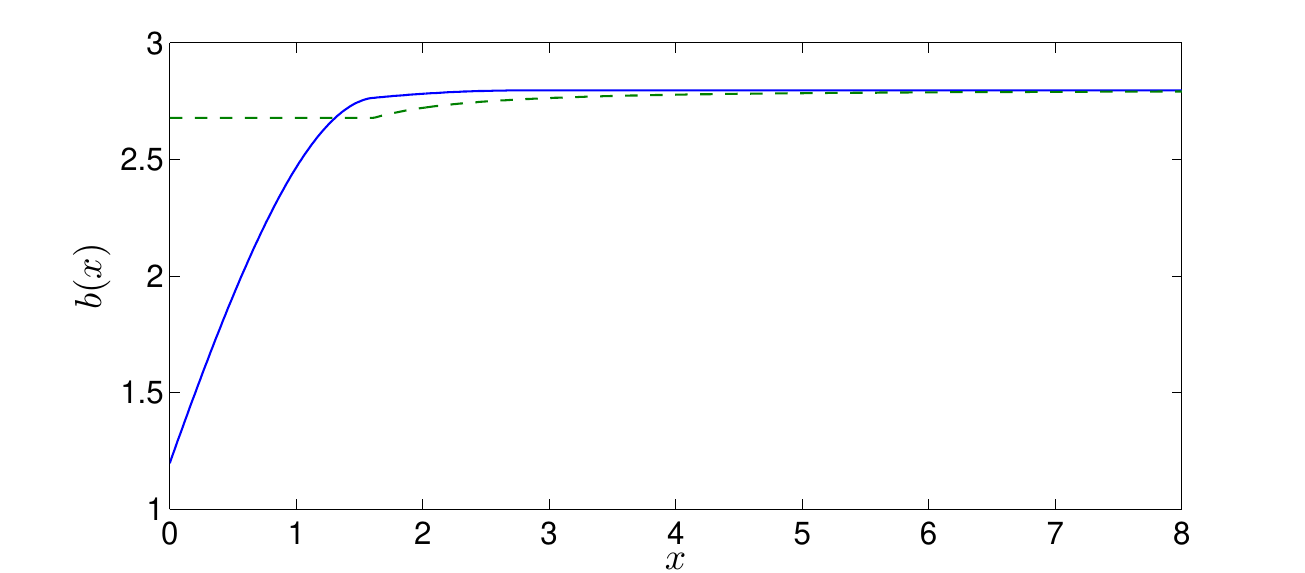}  \\
		\includegraphics[width=0.8\textwidth]{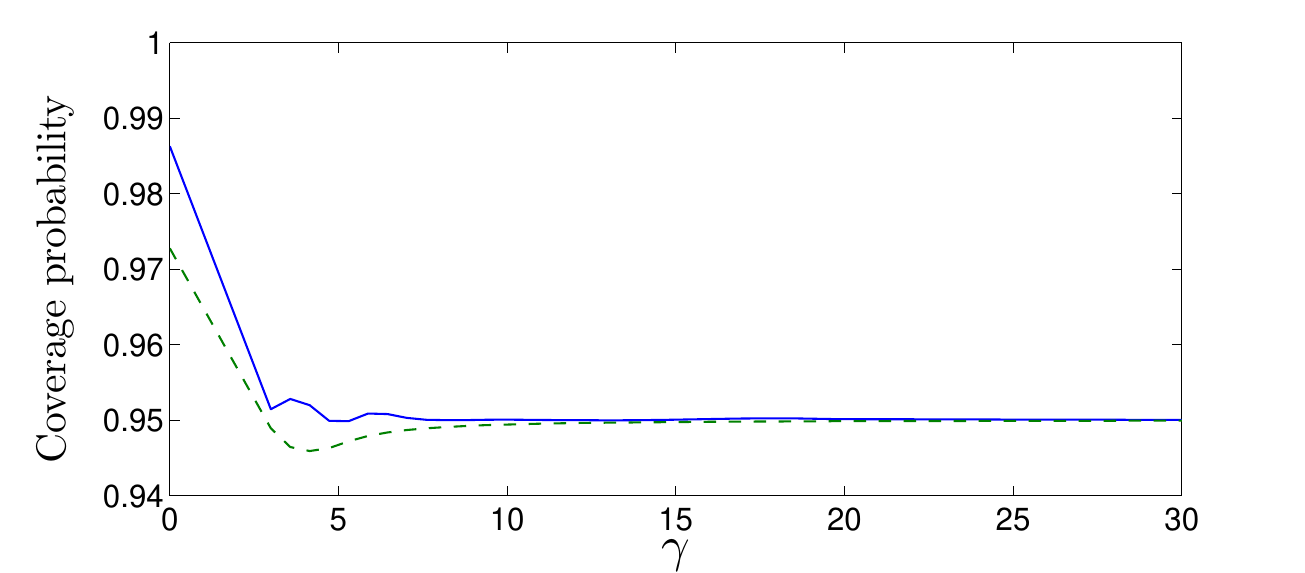} \\
		\includegraphics[width=0.8\textwidth]{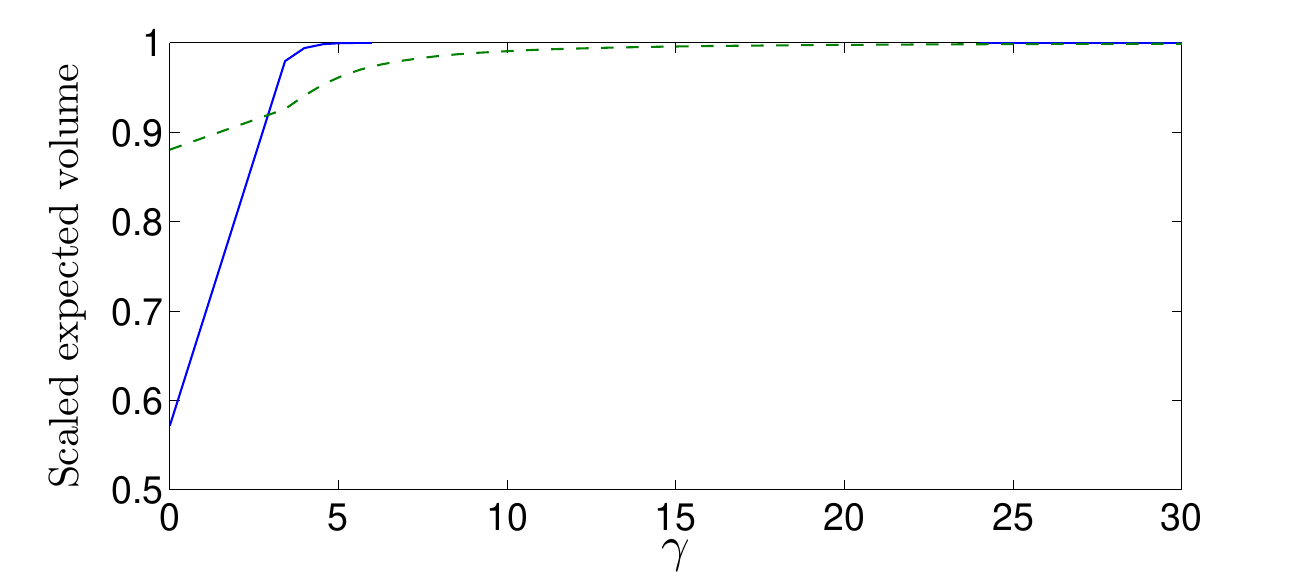} \\
		Legend: ------ optimized RCS\ \ \ - - -  RCS of Casella and Hwang (1983, Section 4).
	\end{tabular}
	\caption{Graphs of the functions $a$ and $b$ and the coverage probability and
		scaled expected volume (as functions of $\gamma = \| \btheta \|$) for both the optimized RCS and
		the RCS of Casella and Hwang (1983, Section 4),
		for $1-\alpha=0.95$ and $p=3$.}
	\label{Figure1}
\end{figure}
\clearpage

\begin{table}[h]
	\label{sig_known_all_results}
	\centering
	\begin{tabular}{>{\centering\arraybackslash}m{1cm} >{\centering\arraybackslash}m{2cm} >{\centering\arraybackslash}m{3cm}
			>{\centering\arraybackslash}m{2cm} >{\centering\arraybackslash}m{2cm}}
		\hline
		$p$ & \multicolumn{2}{c}{RCS of Casella and Hwang}   &  \multicolumn{2}{c}{Optimized RCS}\\
		& \multicolumn{2}{c}{$J(a^{+},b^*)$}          &
		\\
		\hline
		\\
		& minimum    & SEV at                       & minimum    & SEV at  \\
		& CP         & $\btheta = \boldsymbol{0}$   & CP         & $\btheta = \boldsymbol{0}$ \\
		\\
		3	& 0.94594    & 0.88054                      & 0.95       & 0.57155                      \\
		4	& 0.94609    & 0.75553                      & 0.95       & 0.37381                      \\
		5	& 0.94666    & 0.63637                      & 0.95       & 0.23814                      \\
		6	& 0.94852    & 0.52826                      & 0.95       & 0.15362                      \\
		7	& 0.95       & 0.43314                      & 0.95       & 0.09975                      \\
		8	& 0.95       & 0.35142                      & 0.95       & 0.06503                      \\
		9	& 0.95       & 0.28243                      & 0.95       & 0.04221                      \\
		10	& 0.95       & 0.22505                      & 0.95       & 0.02741                      \\
		11	& 0.95       & 0.17794                      & 0.95       & 0.01782                      \\
		12	& 0.95       & 0.13966                      & 0.95       & 0.01155                      \\
		13	& 0.95       & 0.10889                      & 0.95       & 0.00752                      \\
		20	& 0.95       & 0.01629                      & 0.95       & 0.00049                      \\
		25	& 0.95       & 0.00367                      & 0.95       & 0.00004                      \\
		\hline
	\end{tabular}
	\caption{Comparison of the optimized RCS and $J(a^{+},b^*)$,
		the RCS of Casella and Hwang (1983, Section 4),
		with respect to the minimum coverage probability (CP) and the scaled expected volume (SEV)
		at $\btheta = \boldsymbol{0}$,
		for $1-\alpha=0.95$ and $p \in \{3,4,\dots,13,20,25\}$.}
	\label{Table1}
\end{table}

{\sl Remark 2.2:}  We have chosen the functions $a$ and $b$ to be
a piecewise cubic Hermite interpolating polynomial
in the interval $[0,k]$. Other choices of parametric forms for this function are also possible.
For example, one could choose this function to be a quadratic spline in this
interval. Our reason for choosing piecewise cubic Hermite interpolation is that this leads to
interpolating function with fewer undesirable oscillations between the knots than, say, natural
cubic spline interpolation.

\smallskip

{\sl Remark 2.3:} Casella and Hwang (1983, Section 3), argue that it is desirable
that the set $S_{\theta}$, described in their Theorem 3.1, is an interval. During the
computation of the optimized RCS, it was found that at every stage (including the final stage)
this set was an interval.

\bigskip

\noindent {\large{\bf 3. Results for $\boldsymbol{\sigma^2}$ known. Comparison of the inclusion functions the standard confidence sphere, the RCS of Casella and Hwang and the optimized RCS}}

\medskip

Efron (2006) considers confidence sets of the form
\begin{equation*}
\bigcup_{\gamma \ge 0} {\tt SC}_{\bx} \big(\omega_{\gamma}(\| \bx \|), \gamma \big),
\end{equation*}
where $\gamma = \| \btheta \|$ and 
${\tt SC}_{\bx} \big(\omega_{\gamma}(\| \bx \|), \gamma \big)$ is a spherical 
cap of values of $\btheta$ of angular radius $\omega_{\gamma}(\| \bx \|)$
centered at $\gamma \, \bx / \| \bx \|$.
For any rotationally symmetric confidence set we can use this representation to 
find the function $\omega_{\gamma}(\| \bx \|)$. This function can then be used
to find the `inclusion function' $i_{\gamma}(\| \bx \|)$
defined by Efron (2006) to be the conditional probability
\begin{equation*}
P_{\btheta} \Big(\bx \in {\tt SC}_{\btheta} 
\big(\omega_{\gamma}(\| \bx \|), \| \bx \| \big) \, \Big| \, \| \bx \| \Big),
\end{equation*}
where 
${\tt SC}_{\btheta} \big(\omega_{\gamma}(\| \bx \|), \| \bx \| \big)$ is a spherical 
cap of values of $\btheta$ of angular radius $\omega_{\gamma}(\| \bx \|)$
centered at $\| \bx \| \, \btheta  / \gamma$. This conditional density is found
using (2.9) of Efron (2006). The coverage probability of the confidence set is, for any given $\gamma$, $\int_0^{\infty} i_{\gamma}(y) \, f_{\gamma}(y) dy$,
where $f_{\gamma}$ denotes the probability density function of $\| \bX \|$.

In Figures \ref{Figure2} and \ref{Figure3} we compare the inclusion functions of the standard confidence sphere, the RCS of Casella and Hwang (1983, Section 4), and the optimized RCS for $1-\alpha=0.95$. Figures 
\ref{Figure2} and \ref{Figure3}
are for $p=3$ and $p=10$, respectively. 
The top and middle panels of Figure  
\ref{Figure2} 
are for the fairly small values of 
$\gamma = 1.5$ and $\gamma = 2.3$. 
The superiority of the optimized RCS
in terms of scaled expected volume for $\gamma = 0$, is reflected by the fact that,
in these panels, the  inclusion function for the optimized RCS matches up better with the pdf of $\| \bX \|$
than the inclusion functions for both the standard confidence sphere and the RCS of Casella and Hwang. The bottom panel of Figure 2 is for the larger value of $\gamma = 3$. For this larger value, the inclusion functions of both the RCS of Casella and Hwang and the optimized RCS match up equally well (and better than the standard confidence sphere) with the pdf of $\| \bX \|$.

The top and middle panels of Figure 3 are for the fairly small values of 
$\gamma = 2$ and $\gamma = 3.5$. The superiority of the optimized RCS
in terms of scaled expected volume for $\gamma = 0$, is reflected by the fact that,
in these panels, the  inclusion function for the optimized RCS matches up better with the pdf of $\| \bX \|$
than both the inclusion functions for the standard confidence sphere and the RCS of Casella and Hwang. The bottom panel of Figure \ref{Figure3}
is for the larger value of $\gamma = 6$. For this larger value, the inclusion function the RCS of Casella and Hwang matches up 
with the pdf of $\| \bX \|$
somewhat better than the optimized RCS match. Both of these RCS's match up better with
this pdf than the standard confidence sphere. The bottom panel of Figure \ref{Figure3}
compares the inclusion functions for the same values of $p$, $1 - \alpha$ and $\gamma$ as
Figure 2 of Efron (2006).

\begin{figure}[h!]
	\centering
	\includegraphics[width=1\textwidth]{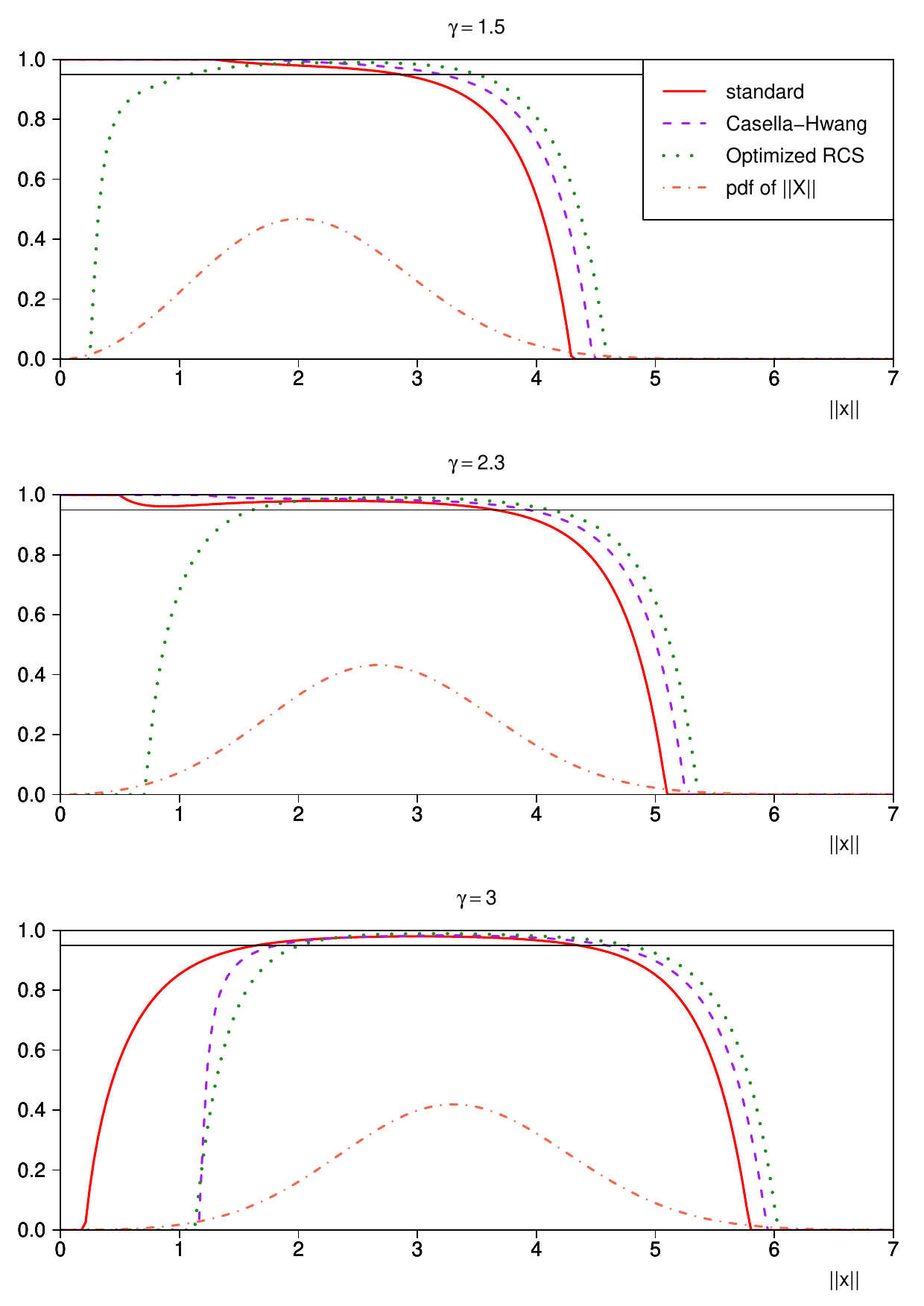} 
	\caption{Each panel consists of graphs of the inclusion functions of the standard confidence sphere, the RCS of Casella and Hwang (1983, Section 4), and the optimized RCS for $1-\alpha=0.95$ and $p=3$. Also included in each panel is the 
		graph of $f_{\gamma}$ the pdf of $\| \bX \|$.	
		The top, middle and bottom panels are for $\gamma = 1.5$, $\gamma = 2.3$ 
		and $\gamma = 3$, respectively.}
	\label{Figure2}
\end{figure}

\clearpage

\begin{figure}[h!]
	\centering
	\includegraphics[width=1\textwidth]{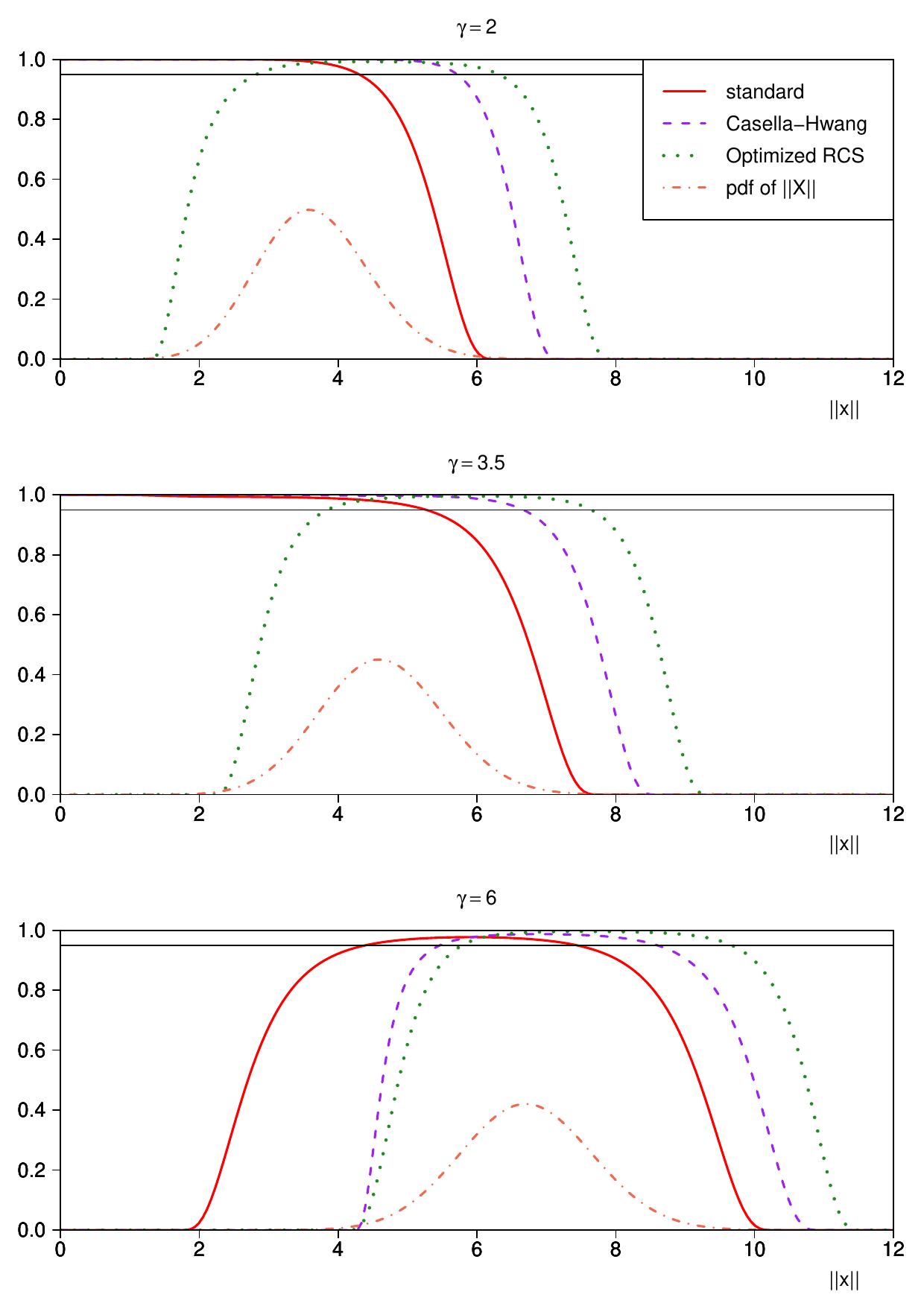} 
	\caption{Each panel consists of graphs of the inclusion functions of the standard confidence sphere, the RCS of Casella and Hwang (1983, Section 4), and the optimized RCS for $1-\alpha=0.95$ and $p=10$. Also included in each panel is the 
		graph of $f_{\gamma}$ the pdf of $\| \bX \|$.	
		The top, middle and bottom panels are for $\gamma = 2$, $\gamma = 3.5$ 
		and $\gamma = 6$, respectively.}
	\label{Figure3}
\end{figure}

\clearpage

\bigskip


\noindent {\large{\bf 4. Results for $\boldsymbol{\sigma^2}$ unknown. Comparison of the
			performances of the optimized RCS and the RCS centered on an analogue of the positive-part
			James-Stein estimator}}

		\medskip
		
		In this section, we consider the more difficult case that $\sigma^2$ is unknown.
		Suppose that we have additional data that provides the estimator $S^2$ for $\sigma^2$,
		where $m S^2/\sigma^2 \sim \chi^2_m$ and $S^2$ and $\bX$ are independent.
		The standard $1-\alpha$ confidence set is
		$\widetilde{I} = \big\{\btheta : \, \| \btheta-\bX \| \leq \widetilde{d} \, S \big \}$,
		where the positive number $\widetilde{d}$ satisfies
		$P\big( G \leq \widetilde{d}^2/p \big) = 1-\alpha$ for $G \sim F_{p,m}$.
		Define the class of RCS's that can be expressed in the form
		\begin{equation*}
		\widetilde{J}(\widetilde{a},\widetilde{b}) = \big\{ \btheta : \| \widetilde{a}(\widetilde{T}) \bX - \btheta \| \leq S \, \widetilde{b}(\widetilde{T}) \big\},
		\end{equation*}
		where $\widetilde{a}:[0, \infty) \rightarrow (0, \infty)$,
		$\widetilde{b}:[0, \infty) \rightarrow (0, \infty)$
		and $\widetilde{T} = \|\bX\|/(\sqrt{p} \, S)$.
		This choice of center and radius has some intuitive appeal, since
		$\widetilde{T} = \|\bX\|/(\sqrt{p} \, S)$ may be viewed as a test statistic for testing the null hypothesis
		that $\btheta = \boldsymbol{0}$ against the
		alternative hypothesis that $\btheta \ne \boldsymbol{0}$.
		This class of RCS's is an analogue, for $\sigma^2$ unknown, of the broad class of
		RCS's described by Casella and Hwang (1983, Section 3), for $\sigma^2$ known.
		We assess the RCS $\widetilde{J}(\widetilde{a},\widetilde{b})$ using both its coverage probability an its
		scaled expected volume, which is defined
		to be the ratio (expected volume of the RCS) / (expected volume of $\widetilde{I}$).

		The coverage probability of $\widetilde{J}(\widetilde{a},\widetilde{b})$ is
		\begin{equation*}
		P \big( \btheta \in \widetilde{J}(\widetilde{a},\widetilde{b}) \big)
		= P \left ( \left \| \widetilde{a} \left( \frac{\|\bY\|}{\sqrt{p} \, W} \right) - \boldsymbol{\vartheta} \right \|
		\le W \widetilde{b} \left( \frac{\|\bY\|}{\sqrt{p} \, W} \right) \right),
		\end{equation*}
		where $W = S/\sigma$, $\boldsymbol{\vartheta} = \btheta/\sigma$ and $\bY = \bX/\sigma$. Obviously,
		$\bY \sim N(\boldsymbol{\vartheta}, \boldsymbol{I})$ and $W$ has the same distribution as $\sqrt{Q/m}$, where $Q \sim \chi^2_m$.
		Since $\bY$ and $W$ are independent, this coverage probability is equal to
		\begin{equation}
		\label{CP1_unknown_sigma_sq}
		\int_0^{\infty} P \left ( \left \| \widetilde{a} \left( \frac{\|\bY\|}{\sqrt{p} \, w} \right) - \boldsymbol{\vartheta} \right \| \le w \, \widetilde{b} \left( \frac{\|\bY\|}{\sqrt{p} \, w} \right) \right) \, f_W(w) \, dw,
		\end{equation}
		where $f_W$ denotes the probability density function of $W$. Let $\gamma = \| \boldsymbol{\vartheta} \| = \| \btheta  / \sigma \|$.
		It follows from Theorem \ref{TheoremA.1} (presented in Appendix
		A) that, for any given $w > 0$ and functions $\widetilde{a}$
		and $\widetilde{b}$,
		\begin{equation}
		\label{part_of_integrand_CP}
		P \left ( \left \| \widetilde{a} \left( \frac{\|\bY\|}{\sqrt{p} \, w} \right) - \boldsymbol{\vartheta} \right \| \le w \, \widetilde{b} \left( \frac{\|\bY\|}{\sqrt{p} \, w} \right) \right)
		\end{equation}
		is a function of $\gamma$. It follows from \eqref{CP1_unknown_sigma_sq} that the coverage probability of $\widetilde{J}(\widetilde{a},\widetilde{b})$ is
		also a function of $\gamma$. We evaluate \eqref{part_of_integrand_CP} using the computationally convenient formula of
		Casella and Hwang (1983, Section 3), which is applicable for $p$ odd. The method used for the numerical evaluation of
		\eqref{CP1_unknown_sigma_sq} is described in Appendix B.

		We define the scaled expected volume of $\widetilde{J}(\widetilde{a},\widetilde{b})$ to be the ratio
		\begin{equation}
		\label{sev_initial_unknown_sigma_squared}
		\frac{E_{\btheta, \,\sigma} (\text{volume of } \widetilde{J}(\widetilde{a},\widetilde{b}))}{E_{\btheta, \,\sigma}(\text{volume of } \widetilde{I})}
		= \frac{E_{\btheta} \left( W^p \, \widetilde{b}^p \left( \frac{\|\bY\|}{\sqrt{p} \, w} \right) \right)}{ \widetilde{d}^p \, E \left( W^p \right)}.
		\end{equation}
		In Appendix 
		B, we show that this is a function of $\gamma = \| \boldsymbol{\vartheta} \| = \| \btheta  / \sigma \|$, for given function $\widetilde{b}$. We also
		derive a new computationally convenient formula for this scaled expected volume.

		Define
		\begin{equation*}
		\widetilde{a}^+(x) =  \max\left\{ 0, \, 1- \left(1-\frac{2}{p}\right)\left(\frac{m}{m+2}\right)\frac{1}{x^2} \right\}.
		\end{equation*}
		Note that $\widetilde{a}^+ \big(\|\bX\|/(\sqrt{p} S) \big) \bX$
		is the positive-part version of an estimator of $\btheta$ due James and Stein (1961, pp. 365--366).
		This estimator belongs to a class of estimators described by Baranchik (1970) and is an analogue, for
		$\sigma^2$ unknown, of the positive-part James-Stein estimator.

		To find the optimized RCS, we require
		that the functions $\widetilde{a}$ and $\widetilde{b}$
		satisfy the following conditions.

		\smallskip
		
		\noindent \underbar{Condition $\widetilde{\text{A}}$} \ $\widetilde{a}:[0, \infty) \rightarrow (0, \infty)$ is a continuous
		nondecreasing function
		that satisfies $\widetilde{a}(x)=\widetilde{a}^{+}(x)$ for all $x \geq k$.
		
		\smallskip
		
		\noindent \underbar{Condition $\widetilde{\text{B}}$}
		\ $\widetilde{b}:[0, \infty) \rightarrow (0, \infty)$ is a continuous
		nondecreasing function
		that satisfies $\widetilde{b}(x) = \widetilde{d}$ for all $x \geq \widetilde{k}$.
		
		\smallskip

		We compare two different optimized RCS's. The first of these RCS's is centered on $\widetilde{a}^+ \big(\|\bX\|/(\sqrt{p} S) \big) \bX$.
		In other words, this RCS has the form  $\widetilde{J}(\widetilde{a}^+,\widetilde{b})$. For computational feasibility, we
		specify the following parametric form for the
		function $\widetilde{b}$.
		\begin{enumerate}
			
			\item[]
			
			Suppose that $y_1,\dots,y_{q_2}$ satisfy $0 = y_1 < y_2 < \dots < y_{q_2} = k$. The function $\widetilde{b}$ is fully
			specified by the vector $\widetilde{b}(y_1),\dots,\widetilde{b}(y_{q_2})$ as follows. The value of $\widetilde{b}(y)$ for any given
			$y \in [0, k]$ is found by piecewise cubic Hermite polynomial interpolation
			for these given function values. We call $y_1, \dots,y_{q_2}$ the knots
			of this piecewise cubic Hermite polynomial.
			
		\end{enumerate}
		\noindent For judiciously-chosen values of $k$ and these knots, we compute the function $\widetilde{b}$, which takes
		this parametric form, is nondecreasing and is
		such that (a) the scaled expected
		volume evaluated at $\btheta = \boldsymbol{0}$ (i.e. at $\gamma= 0$) is minimized
		and (b) the coverage
		probability of $\widetilde{J}(\widetilde{a}^+,\widetilde{b})$ never falls below $1-\alpha$.

		The second of the RCS's has the form $\widetilde{J}(\widetilde{a},\widetilde{b})$. For computational feasibility, we
		additionally specify the following parametric form for the
		function $\widetilde{a}$.
		\begin{enumerate}
			
			\item[]
			
			Suppose that $x_1,\dots,x_{q_1}$ satisfy $0 = x_1 < x_2 < \dots < x_{q_1} = k$. The function $\widetilde{a}$ is fully
			specified by the vector $\widetilde{a}(x_1),\dots,\widetilde{a}(x_{q_1})$ as follows. The value of $\widetilde{a}(x)$ for any given
			$x \in [0, k]$ is found by piecewise cubic Hermite polynomial interpolation
			for these given function values. We call $x_1, \dots,x_{q_1}$ the knots
			of this piecewise cubic Hermite polynomial.

		\end{enumerate}
		\noindent For judiciously-chosen values of $k$ and the knots, we compute the functions $\widetilde{a}$ and $\widetilde{b}$, which take
		these parametric forms, are nondecreasing and are
		such that (a) the scaled expected
		volume evaluated at $\btheta = \boldsymbol{0}$ (i.e. at $\gamma= 0$) is minimized
		and (b) the coverage
		probability of $\widetilde{J}(\widetilde{a},\widetilde{b})$ never falls below $1-\alpha$.

		For $1-\alpha = 0.95$, we compare the coverage probability and scaled expected volume
		of these two RCS's for odd values of $p$.
		We chose the knots of $\widetilde{a}$ that allow this function
		to provide a good approximation to $\widetilde{a}^+$. In this way,
		we sought to ensure that $\widetilde{J}(\widetilde{a},\widetilde{b})$ could perform at least as well as
		$\widetilde{J}(\widetilde{a}^+,\widetilde{b})$
		in terms of minimizing the scaled expected volume at $\gamma = 0$, subject to the
		coverage constraint.
		Some exploratory computations led us to choose $k=10$ and the following knots for
		$\widetilde{a}$ and $\widetilde{b}$.
		Since $\widetilde{a}^+(x)=0$
		for $0 \leq x \leq \sqrt{2(p-2)m/p(m+2)}$, we place the
		first two knots of the function $\widetilde{a}$ at $0$ and $\sqrt{2(p-2)m/p(m+2)}$.
		The next three knots of $\widetilde{a}$ are at equally spaced positions between
		$\sqrt{2(p-2)m/p(m+2)}$ and $k/2$. The last two knots of $\widetilde{a}$ are
		at $k/2$ and $k$. For both RCS's we
		place the knots of the function $\widetilde{b}$ at $0, \,2, \,4, \,6, \,8,$ and $10$.

		The coverage constraint was implemented in the computations by requiring that the
		coverage probability of these RCS's is greater than or equal to $1-\alpha$ for all
		$\gamma \in \{0,1,2,\dots,64,65\}$. This was shown to be adequate to the task by checking
		numerically, at the completion of the computation of these RCS's,
		that the coverage probability constraint
		is satisfied for all $\gamma \ge 0$.

		We compare the two optimized RCS's for all combinations of
		$p \in \{3,5,7,9,25\}$ and $m \in \{3,10,30\}$. Figure \ref{p3m3_sig_unknown}
		compares these RCS's in detail
		for $p=3$ and $m=3$.
		Table \ref{sigUnknown_all_results} presents the comparison of
		the scaled expected volumes  at $\btheta = \boldsymbol{0}$ of the two optimized RCS's
		for all the combinations of $m \in \{3,10,30\}$ and $p \in \{3,5,7,9,25\}$.
		This table shows that, for every combination of $m$ and $p$ considered,
		the RCS of the form $\widetilde{J}(\widetilde{a},\widetilde{b})$ achieves a significantly lower scaled expected volume at $\btheta = \boldsymbol{0}$
		than the RCS of the form $\widetilde{J}(\widetilde{a}^+,\widetilde{b})$.
		
			For these optimized RCS's, the decrease in the scaled expected volume
		at $\btheta = \boldsymbol{0}$ is higher when $m$ is smaller, for given $p$.
		Note that the coverage probability results of these
		RCS's are not presented, since both of these optimized RCS's
		achieve a minimum coverage probability greater than or equal to $0.95$.
		
		In summary, both of the optimized $1-\alpha$ RCS's
		compare favorably with the standard $1-\alpha$ confidence set for $\btheta$.
		Also, the optimized $1-\alpha$ RCS of the form $\widetilde{J}(\widetilde{a},\widetilde{b})$ compares favorably with
		the optimized $1-\alpha$ RCS of the form $\widetilde{J}(\widetilde{a}^+,\widetilde{b})$ in terms of the scaled expected volume
		at $\btheta = \boldsymbol{0}$.

		\newpage
		\begin{figure}[h!]
			\centering
			\begin{tabular}{c}
				\includegraphics[width=0.75\textwidth]{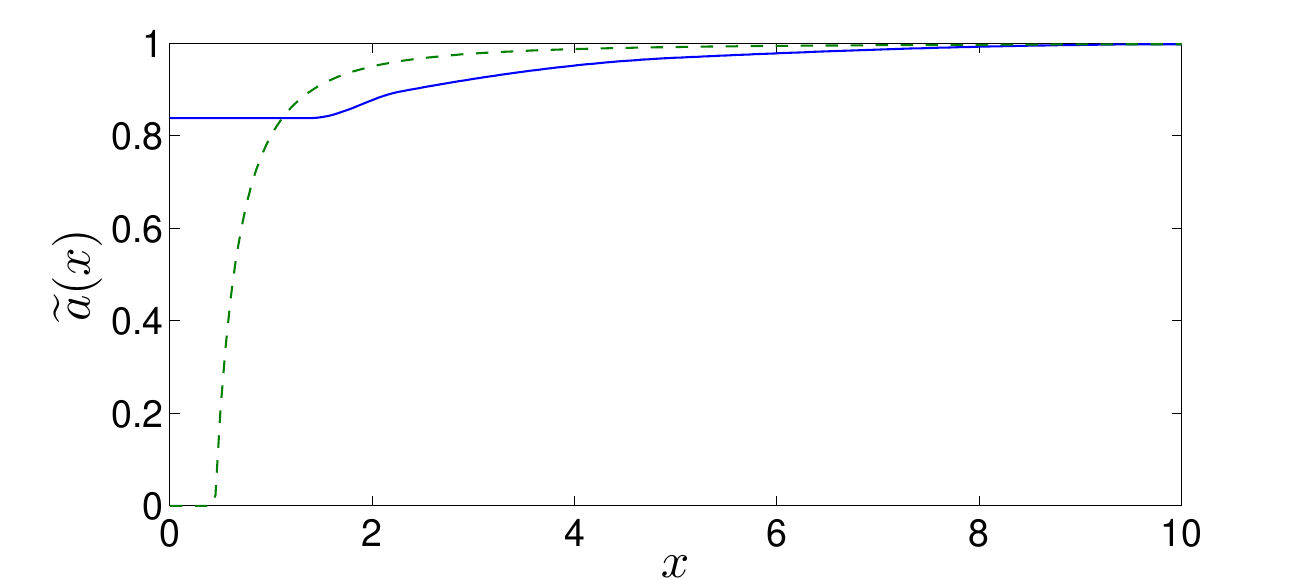} \\
				\includegraphics[width=0.75\textwidth]{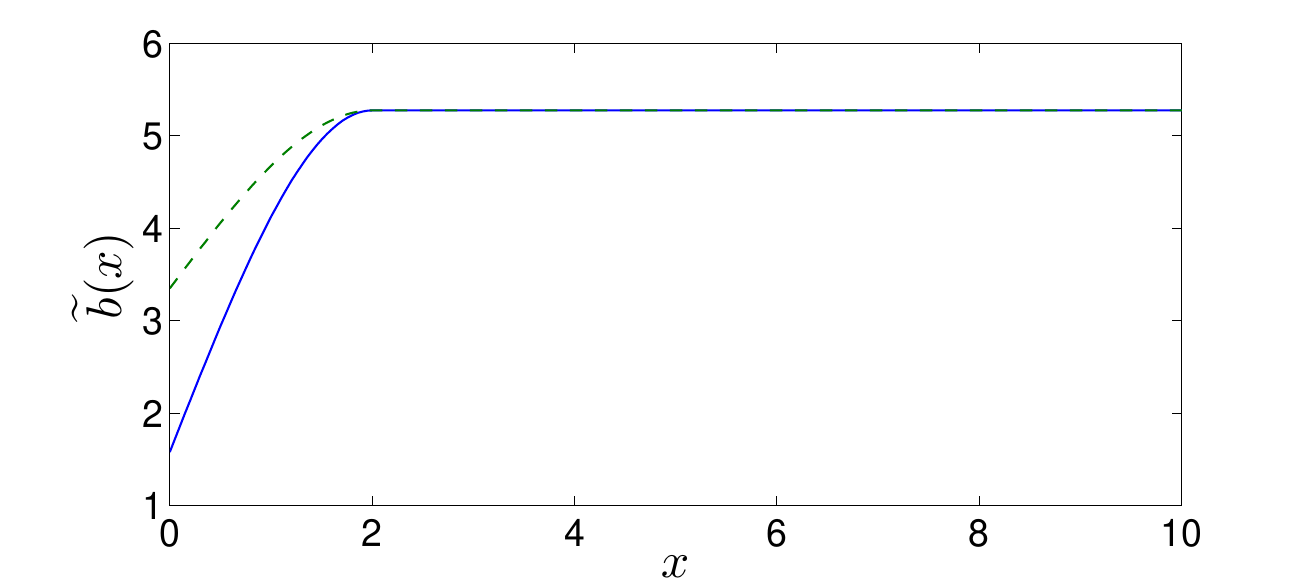}  \\
				\includegraphics[width=0.75\textwidth]{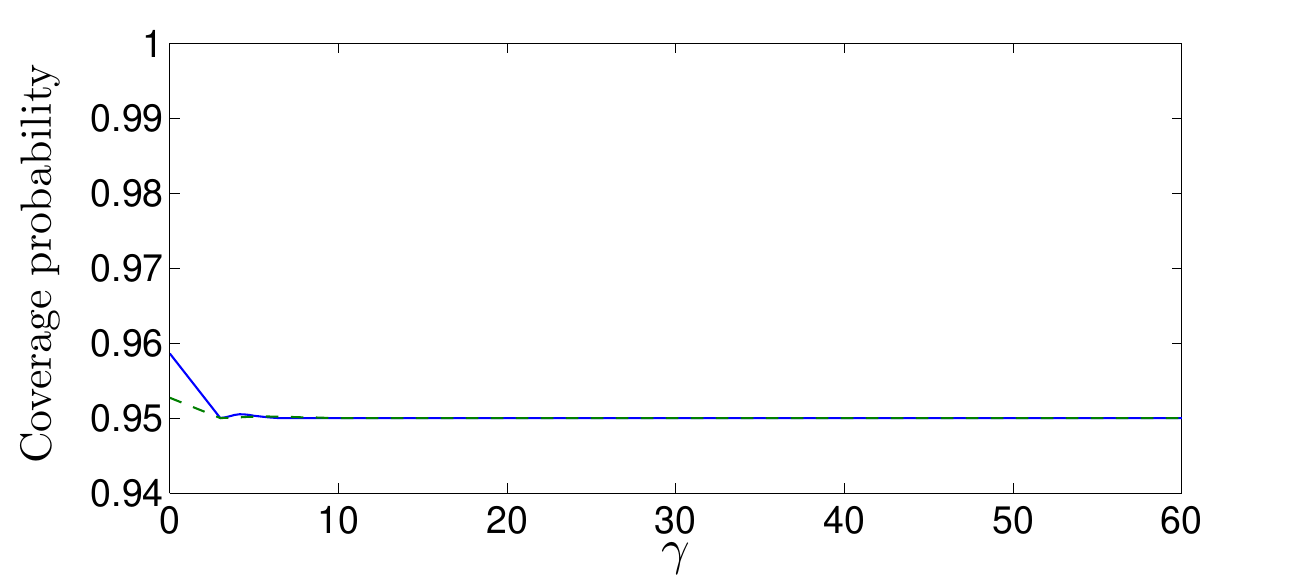} \\
				\includegraphics[width=0.75\textwidth]{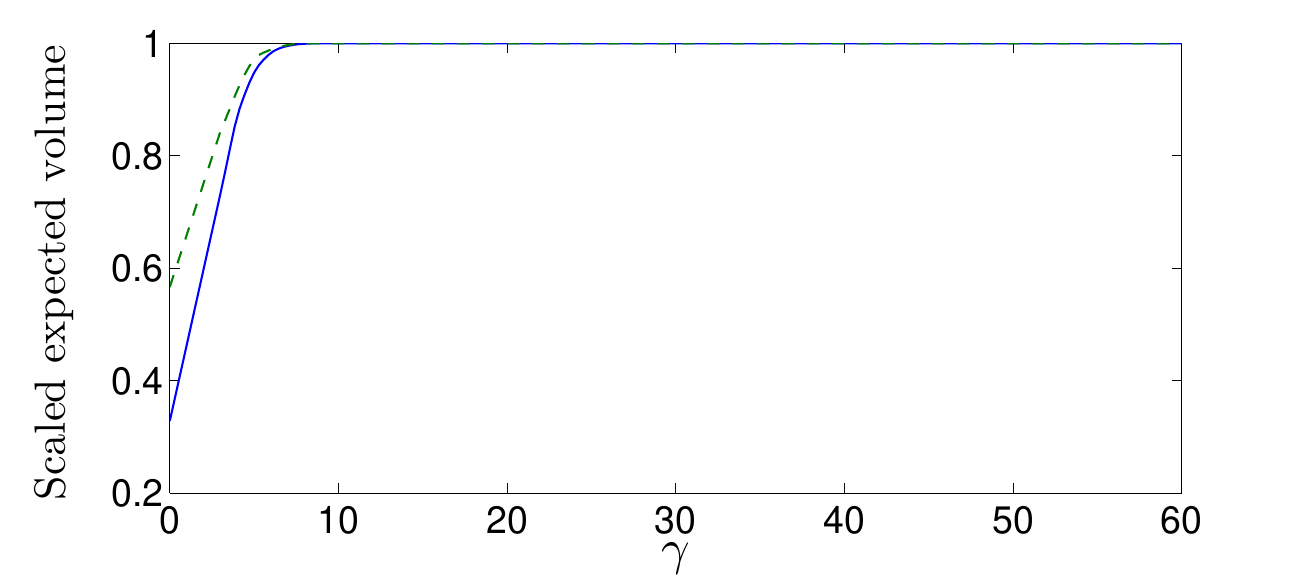} \\
				Legend: \textbf{\textcolor[rgb]{0, 0, 0.8}{------}} optimized RCS of the form $\widetilde{J}(\widetilde{a},\widetilde{b})$
				\ \ \  \textbf{\textcolor[rgb]{0.29, 0.6, 0.13}{- - -}}  optimized RCS of the form $\widetilde{J}(\widetilde{a}^+,\widetilde{b})$
			\end{tabular}
			\caption{Graphs of the functions $\widetilde{a}$ and $\widetilde{b}$ and the coverage probability and the
				scaled expected volume (as functions of $\gamma =  \| \btheta  / \sigma \|$) for both optimized RCS's,
				for $1-\alpha=0.95$, $p=3$ and $m=3$.}
			\label{p3m3_sig_unknown}
		\end{figure}
		\clearpage

		\begin{table}[!htbp] 
			\centering
			\begin{tabular}{>{\centering\arraybackslash}m{1cm} >{\centering\arraybackslash}m{5.5cm} >{\centering\arraybackslash}m{5.5cm}}
				\hline
				& \multicolumn{2}{c}{$m=3$}     \\
				\hline
				\\
				& SEV at $\btheta = \boldsymbol{0}$ of the      & SEV at $\btheta = \boldsymbol{0}$ of the   \\
				$p$ & optimized RCS of the form $\widetilde{J}(\widetilde{a}^+,\widetilde{b})$                & optimized RCS of the form $\widetilde{J}(\widetilde{a},\widetilde{b})$                \\
				\\
				3	& 0.56644                                   & 0.32788               \\
				5	& 0.07008                                   & 0.01724               \\
				7	& 0.00555                                   & 0.00072               \\
				9	& 0.00029                                   & 0.00008               \\
				.	& .                                         & .                     \\
				.	& .                                         & .                     \\
				25	& 3.19 $\times 10^{-7}$                     & 7.93 $\times 10^{-8}$ \\
				\\
				\hline
				& \multicolumn{2}{c}{$m=10$}     \\
				\hline
				\\
				& SEV at $\btheta = \boldsymbol{0}$ of the      & SEV at $\btheta = \boldsymbol{0}$ of the   \\
				$p$ & optimized RCS of the form $\widetilde{J}(\widetilde{a}^+,\widetilde{b})$               & optimized RCS of the form $\widetilde{J}(\widetilde{a},\widetilde{b})$                \\
				\\
				3	& 0.71589                                   & 0.52070               \\
				5	& 0.22627                                   & 0.11229               \\
				7	& 0.05747                                   & 0.02035               \\
				9	& 0.01215                                   & 0.00332               \\
				.	& .                                         & .                     \\
				.	& .                                         & .                     \\
				25	& 4.27 $\times 10^{-6}$                     & 1.12 $\times 10^{-7}$ \\
				\\
				\hline
				& \multicolumn{2}{c}{$m=30$}     \\
				\hline
				\\
				& SEV at $\btheta = \boldsymbol{0}$ of the      & SEV at $\btheta = \boldsymbol{0}$ of the   \\
				$p$ & optimized RCS of the form $\widetilde{J}(\widetilde{a}^+,\widetilde{b})$                & optimized RCS of the form $\widetilde{J}(\widetilde{a},\widetilde{b})$               \\
				\\
				3	& 0.78059                                   & 0.56333               \\
				5	& 0.33636                                   & 0.17951               \\
				7	& 0.12474                                   & 0.05855               \\
				9	& 0.04111                                   & 0.01721               \\
				.	& .                                         & .                     \\
				.	& .                                         & .                     \\
				25	& 3.49 $\times 10^{-6}$                     & 1.13 $\times 10^{-6}$ \\
				\hline
			\end{tabular}
			\caption{Comparison of the optimized RCS's of the forms $\widetilde{J}(\widetilde{a}^+,\widetilde{b})$ and $\widetilde{J}(\widetilde{a},\widetilde{b})$,
				with respect to the scaled expected volume (SEV) at $\btheta = \boldsymbol{0}$,
				for $1-\alpha=0.95$, $p \in \{3,5,7,9,25 \}$ and $m \in \{3,10,30 \}$.}
			\label{sigUnknown_all_results}
		\end{table}

\bigskip

\noindent {\large{\bf 5. Conclusion}}

\medskip

The method of construction of a $1-\alpha$ confidence set that we have used is the following. Suppose that we have a
clearly specified class of confidence sets and a clearly specified criterion that should be optimized.
This specified criterion is numerically
optimized, subject to the coverage constraint and
the constraint that the confidence set (belonging to this class)
has volume no larger than the standard $1-\alpha$ confidence set, for all possible data values.

We have successfully applied this method for the broad class of recentered confidence spheres described by
Casella and Hwang (1983, Section 3), in the case of known $\sigma^2$, and an analogue of this class, in the case of unknown
$\sigma^2$. Motivated by the assumption that we have uncertain prior information that $\btheta = \boldsymbol{0}$,
the criterion that we have chosen to optimize is the scaled expected volume at $\btheta = \boldsymbol{0}$.
This optimization is possible because
a recentered confidence sphere has relatively simple properties. This sphere is specified by two nondecreasing real-valued functions (which, in turn, specify the center and radius functions) defined on the positive
real line. Both the coverage probability and the scaled expected volume of this sphere are readily-computed functions
of the scalar parameter $\gamma = \| \btheta  / \sigma \|$.
This method of construction can also be applied for other criteria. For example, the criterion could be a weighted average (where the weight is a function of $\| \btheta \|$) of the scaled expected volume, with the largest weight at $\btheta = \boldsymbol{0}$.

Confidence sets for the multivariate normal mean with other shapes have been proposed by Faith (1976), Berger (1980),
Shinozaki (1989), Tseng and Brown (1997) and Efron (2006).
Reviews of the literature on confidence sets for the multivariate normal mean
are provided by Efron (2006) and Casella and Hwang (2012).
It would be interesting to know whether or not our method of construction can also
be applied to a confidence set with one of these other shapes.

\bigskip


\noindent {\large{\bf Appendix A: Results for $\boldsymbol{\sigma}^2$ known}}

\medskip

In this appendix, we derive computationally-convenient formulas for the coverage probability and the scaled expected volume
of the RCS $J(a,b)$, when $\sigma^2$ is known. We assume, without loss of generality,
that $\sigma^2 = 1$. Suppose that $p\geq3$. Let $\gamma = \|\btheta\|$.


\medskip

\noindent {\bf A computationally convenient formula
	for the coverage probability of $\boldsymbol{J(a,b)}$}

\medskip

In this section we show that the coverage probability of $J(a,b)$ is an even function of $\gamma$, for given functions $a$ and $b$, and we
derive a computationally convenient formula
for this coverage probability. 
We first present the proofs and derivations and then state the results.

The coverage probability of $J(a,b)$ is
\begin{equation*}
P\big(\btheta \in J(a,b) \big)=P\big( \| \aT \bX - \btheta \| \leq \bT  \big).
\end{equation*}
Let
$\bZ = \bX-\btheta$, so that $\bZ \sim N\left( \boldsymbol{0}, \boldsymbol{I} \right)$.
We write $\bZ = R \bU$ where $R$ and $\bU$ are independent,
$R^2 \sim \chi^2_p$ and $\bU$ is a random $p$-vector which is distributed uniformly on
the surface of a unit sphere in $\mathbb{R}^p$. Then,
$\|\bZ\|^2 = \bZ^{\top}\bZ = R^2\bU^{\top}\bU = R^2$.
For $\btheta = \boldsymbol{0}$, $\|\bX\|^2 = R^2$. Also,
for $\btheta \ne \boldsymbol{0}$,
\begin{align}
\|\bX\|^2 &= \|\btheta+\bZ\|^2 \notag \\
&= (\btheta+\bZ)^{\top}(\btheta+\bZ) \notag \\
&= \|\btheta\|^2 + 2\btheta^{\top}\bZ + \|\bZ\|^2 \notag \\
\label{X-sig}
&= \|\btheta\|^2 + 2\|\btheta\| \|\bZ\| \left(\frac{\btheta}{\|\btheta\|}\right)^{\top}\left(\frac{\bZ}{\|\bZ\|}\right) + \|\bZ\|^2.
\end{align}
Let $L=\left(\btheta/\|\btheta\|\right)^{\top}\left(\bZ/\|\bZ\|\right)$.
Note that $L$ is a random variable
which has a distribution that does not depend on $\btheta$.
Let $f_L$ denote the probability density function of $L$.
Let $B(x,y) = \Gamma(x) \Gamma(y) / \Gamma(x+y)$ denote the beta function.
For $p \geq 3$,
\begin{equation*}
f_L(\ell) =
\begin{cases}
\left(\sqrt{1-\ell^2}\right)^{p-3} \Big/ {B\big(1/2, \, (p-1)/2 \big)} & \text{for} \ \ \  -1 \leq \ell \leq 1 \\
0 & \text{otherwise}.
\end{cases}
\end{equation*}
Now, \eqref{X-sig} can be written as follows.
\begin{equation}
\label{norm_X_squared}
\|\bX\|^2 = \gamma^2 + 2\gamma RL + R^2.
\end{equation}
Note that this formula is valid for all $\gamma \ge 0$ if, for example, we set $L=1$ for $\btheta = \boldsymbol{0}$.
Thus
\begin{equation}
\label{T}
T = \|\bX\|/\sqrt{p} = \sqrt{(\gamma^2 + 2\gamma RL + R^2)/p}.
\end{equation}
For $\btheta \ne \boldsymbol{0}$,
\begin{align*}
P\big(\btheta \in J(a,b) \big)
&= P\big( \| \aT \bX - \btheta \|^2 \leq b^2(T)  \big) \\
&= P \big( a^2(T) \|\bX\|^2 - 2 a(T) \big(\gamma^2 + \gamma R L \big) + \gamma^2 \le b^2(T)  \big).
\end{align*}
This is a function of $\gamma$, by
\eqref{norm_X_squared} and \eqref{T} and the fact that $(R,L)$ has a distribution that does not depend on $\btheta$.
We now derive the new computationally convenient formula for the coverage probability of $J(a,b)$.
By the law of total probability, this coverage probability is equal to
\begin{align*}
& P\left( \| \aT \bX - \btheta \| \leq \bT, T < k  \right)
+ P\left( \| \aT \bX - \btheta \| \leq \bT, T \geq k  \right) \\
& = P\left( \| \aT \bX - \btheta \| \leq \bT, T < k  \right)
+ P\left( \| \aplsT \bX - \btheta \| \leq d, T \geq k  \right) \\
& = P\left( \| \aT \bX - \btheta \| \leq \bT, T < k  \right)
+ P\left( \| \aplsT \bX - \btheta \| \leq d \right) \\
& \ \ \ \ - P\left( \| \aplsT \bX - \btheta \| \leq d, T < k  \right).
\end{align*}
By using the law of total probability in this way, we simplify the computer programming required
for the evaluation of the coverage probability.
Let
\begin{align*}
c(\gamma; a,b) &= P\left( \| \aT \bX - \btheta \| \leq \bT, T < k  \right) \\
c^*(\gamma; a^+) &= P\left( \| \aplsT \bX - \btheta \| \leq d \right) \\
c^+(\gamma; a^+) &= P\left( \| \aplsT \bX - \btheta \| \leq d, T < k  \right).
\end{align*}
Thus, the coverage probability of $J(a,b)$ is equal to $c(\gamma; a,b)+c^*(\gamma; a^+)-c^+(\gamma; a^+)$.
We now derive
computationally convenient approximations for $c(\gamma; a,b)$, $c^*(\gamma; a^+)$ and $c^+(\gamma; a^+)$.

\medskip

\noindent \underline{Derivation of the computationally convenient approximation for $c(\gamma; a,b)$}

\medskip

\noindent Observe that
\begin{align}
\label{aX_theta_onSig}
\left\| \aT \bX - \btheta \right\|^2
&= \left\| \aT \left(\bX-\btheta\right) + (\aT-1) \btheta \right\|^2   \notag\\
&= \left\| \aT \bZ + (\aT-1) \btheta \right\|^2   \notag\\
&= \big[ \aT \bZ + \{\aT-1\} \btheta \big]^\top \big[ \aT \bZ + \{\aT-1\} \btheta \big]   \notag\\
&= a^2(T)\bZ^\top \bZ + \aT\{\aT-1\}\bZ^\top \btheta   \notag\\
& \ \ \ + \{\aT-1\}\aT \btheta^\top \bZ + \{\aT-1\}^2 \btheta^\top \btheta    \notag\\
&= a^2(T)R^2 + 2\aT\{\aT-1\}\gamma R L + \{\aT-1\}^2\gamma^2.
\end{align}

\noindent Let $t = \sqrt{(r^2 + 2\gamma r \ell + \gamma^2)/p}$.
We define functions $g$ and $h$ as follows.
\begin{align*}
g(r,\ell,\gamma) &= t^2 - k^2 \\
h(r,\ell,\gamma;a,b) &= \sqrt{a^2(t)r^2 + 2\at\{\at-1\}\gamma r \ell + \{\at-1\}^2\gamma^2} - \bt.
\end{align*}

\noindent Let $\cal I$ be defined as follows.
\begin{align*}
{\cal I}(\cal A) =
\begin{cases}
1 \ \ \ \  \text{if} \ \ {\cal A} \ \text{is true} \\
0 \ \ \ \ \text{if} \ \ {\cal A} \ \text{is false} ,
\end{cases}
\end{align*}
where $\cal A$ is an arbitrary statement.
By definition,
$c(\gamma; a,b)$ is equal to
\begin{align}
&P\left( \| \aT \bX - \btheta \| \leq \bT, T < k  \right) \notag \\
&= P\left( \sqrt{a^2(T)R^2 + 2\aT(\aT-1)\gamma R L + (\aT-1)^2\gamma^2} \leq \bT, T < k  \right) \notag \\
&= \int_{0}^{\infty} \int_{-1}^{1} \! {\cal I} \big\{ \sqrt{a^2(t)r^2 + 2\at(\at-1)\gamma r \ell + (\at-1)^2\gamma^2} \leq \bt \big\} \notag \\
& \ \ \ \ \ \ \ \ \ \ \ \ \ \ \ {\cal I} \big(t^2<k^2\big) \, f_R(r) \ f_L(\ell) \ d\ell \  dr . \notag \\
\label{CP-A-1_noTrunc}
&= \int_{0}^{\infty} \int_{-1}^{1} \! {\cal I} \big\{ h(r,\ell,\gamma;a,b) \leq 0 \big\} \ {\cal I} \big\{g(r,\ell,\gamma)<0\big\}
\, f_R(r) \ f_L(\ell)  \ d\ell \ dr.
\end{align}
To compute this multiple integral, our next step is to truncate the outer integral. We approximate this multiple integral by
\begin{equation}
\int_{\subsup{l_r}}^{\subsup{u_r}} \int_{-1}^{1} \! {\cal I} \big \{ h(r,\ell,\gamma;a,b) \leq 0 \big \}
\ {\cal I} \big \{g(r,\ell,\gamma) < 0 \big\}
\, f_R(r) \ f_L(\ell)  \ d\ell  \ dr
\label{CP-A-1}
\end{equation}
where, for a specified small positive number $\delta$,
\begin{equation}
\label{defn_l_r}
l_r =
\begin{cases}
0 & \text{for} \ \ \ \ p \le 10 \\
\sqrt{F_p^{-1}(\delta/2)} & \text{for} \ \ \ \ p > 10,
\end{cases}
\end{equation}
and
\begin{equation}
\label{defn_u_r}
u_r = \sqrt{F_p^{-1}(1-\delta/2)}
\end{equation}
and $F_p$ denotes the $\chi^2_p$ distribution function.
The reason for not truncating the integral at the lower endpoint for $p \le 10$ is that there
is little to be gained in this case.
The following lemma provides an upper bound on the error of approximation.

\begin{lemma}
	\label{Lemma1}
	Let $e = \eqref{CP-A-1_noTrunc} - \eqref{CP-A-1}$. For $p \le 10$, $0 \le e \le \delta/2$. Also, for $p > 10$, $0 \le e \le \delta$.
\end{lemma}

\begin{proof}
	Let $e = \eqref{CP-A-1_noTrunc} - \eqref{CP-A-1}$. Obviously, $e \ge 0$ and
	\begin{align*}
	e &= \int_{0}^{\subsup{l_r}} \left[ \int_{-1}^{1} \! {\cal I} \big\{ h(r,\ell,\gamma;a,b) \leq 0 \big\}
	\ {\cal I} \big\{g(r,\ell,\gamma)<0\big\}
	\, f_L(\ell) \, d\ell \right] \, f_R(r) \, dr  \\
	& \ \ \ + \int_{\subsup{u_r}}^{\infty} \left[ \int_{-1}^{1} \! {\cal I} \big\{ h(r,\ell,\gamma;a,b) \leq 0 \big\}
	\ {\cal I} \big\{g(r,\ell,\gamma)<0\big\}
	\, f_L(\ell) \, d\ell \right] \, f_R(r) \, dr.
	\end{align*}
	Since ${\cal I}$ can only take the values $0$ and $1$, we have that
	\begin{align*}
	e &\leq
	\int_{0}^{\subsup{l_r}} \left\{ \int_{-1}^{1} \! f_L(\ell) \, d\ell \right\} \, f_R(r) \, dr
	+ \int_{\subsup{u_r}}^{\infty} \left\{ \int_{-1}^{1} \! f_L(\ell) \, d\ell \right\} \, f_R(r) \, dr \\
	&= \int_{0}^{\subsup{l_r}} \! f_R(r) \, \mathrm{d} r   +   \int_{\subsup{u_r}}^{\infty} \! f_R(r) \, dr \\
	&= P\left( R \leq l_r \right)   +   1 - P\left( R < u_r \right) \\
	&= P\left( R^2 \leq l^2_r \right)   +   1 - P\left( R^2 < u^2_r \right) \\
	&= F_p(l^2_r) + 1 - F_p(u^2_r).
	\end{align*}
\end{proof}

\noindent Obviously, \eqref{CP-A-1} is equal to
\begin{equation}
\label{CP-A-2}
\int_{-1}^{1} \left[ \int_{\subsup{l_r}}^{\subsup{u_r}} \! {\cal I} \big\{ h(r,\ell,\gamma;a,b) \leq 0 \big\} \ {\cal I} \big\{g(r,\ell,\gamma) < 0 \big \}
\, f_R(r) \, dr \right] \  f_L(\ell) \ d\ell.
\end{equation}
We now use the fact that $g(r,\ell,\gamma)$ is a particularly simple function
of $r$ and $\ell$ for given $\gamma$, to simplify \eqref{CP-A-2}.
Note that for given values of $\ell$ and $\gamma$, there is a non-empty interval of values of $r$
such that $g(r,\ell,\gamma)<0$ if and only if
$r^2 + 2\gamma \ell r + \gamma^2 - p k^2 = 0$
has distinct real solutions for $r$.
This condition is equivalent to
\begin{align*}
& (2 \gamma \ell)^2 - 4(\gamma^2-pk^2) > 0 \\
& \Leftrightarrow \ell^2 > (\gamma^2-pk^2)/\gamma^2 \\
& \Leftrightarrow
\begin{cases}
\ell \in [-1,1] \
& \text{if} \  \gamma^2-pk^2 \leq 0 \\
\ell \in [-1,-s) \cup (s,1] \
& \text{otherwise},
\end{cases}
\end{align*}
where $s=\sqrt{\gamma^2 - pk^2}/\gamma$.
This implies that \eqref{CP-A-2} is equal to
\begin{align}
\label{CP-A-3}
c_{\text{approx}}(\gamma; a,b) =
\begin{cases}
\displaystyle \int_{-1}^{1} v(\ell,\gamma; a,b)  f_L(\ell) \ d\ell     \
& \text{if} \  \gamma^2-pk^2 \leq 0 \\
\displaystyle \int_{-1}^{\subsup{-s}} v(\ell,\gamma; a,b)  f_L(\ell) \ d\ell
\\ \qquad \qquad 
+ \displaystyle \int_{\subsup{s}}^{1} v(\ell,\gamma; a,b)  f_L(\ell) \ d\ell    \
& \text{otherwise},
\end{cases}
\end{align}
where
\begin{equation*}
v(\ell,\gamma; a,b) = \displaystyle \int_{\subsup{l_r}}^{\subsup{u_r}} \! {\cal I} \big\{ h(r,\ell,\gamma;a,b) \leq 0 \big\}
\ {\cal I} \big\{ g(r,\ell,\gamma) \le 0 \big\} \, \ f_R(r) \, dr.
\end{equation*}
A computationally convenient formula for $v(\ell,\gamma; a,b)$ 
is obtained as follows.
Suppose the set of values (if it exists) of $r \in [l_r,u_r]$ such that
$g(r,\ell,\gamma) \le 0$ and $h(r,\ell,\gamma;a,b) \leq 0$, for given $\ell$, $\gamma$ and functions $a$ and $b$,
is expressed in the form of a union of disjoint intervals as follows.
\begin{equation*}
\bigcup_{\subsup{i=1}}^{\subsup{K(\ell,\gamma; a,b)}} \Big[l_i(\ell,\gamma; a,b), u_i(\ell,\gamma; a,b)\Big],
\end{equation*}
where $K(\ell,\gamma; a,b)$ denotes the number of such disjoint intervals. Thus
\begin{align*}
v(\ell,\gamma; a,b) =
\begin{cases}
\displaystyle \sum_{\subsup{i=1}}^{\subsup{K(\ell,\gamma; a,b)}} \int_{\subsup{l_i(\ell,\gamma; a,b)}}^{\subsup{u_i(\ell,\gamma; a,b)}}
\!  f_R(r) \, dr \
& \text{if} \ \ K(\ell,\gamma; a,b) > 0 \\
0 \
& \text{if} \ \ K(\ell,\gamma; a,b)=0.
\end{cases}
\end{align*}
Since $R^2 \sim \chi^2_p$, this simplifies to the following. If $K(\ell,\gamma; a,b) > 0$ then
\begin{align*}
v(\ell,\gamma; a,b) =
\displaystyle \sum\limits_{\subsup{i=1}}^{\subsup{K(\ell,\gamma; a,b)}}
\big[ F_p \left\{u_i^2(\ell,\gamma; a,b) \right\} - F_p\left\{l_i^2(\ell,\gamma; a,b) \right\} \big]
\end{align*}
and if  $K(\ell,\gamma; a,b)=0$ then
$v(\ell,\gamma; a,b) = 0$.
Here $F_p$ denotes the $\chi^2_p$ cumulative distribution function.

\medskip

\noindent {\sl Remark:}
The computer program used to find the disjoint intervals $\big[l_i(\ell,\gamma; a,b), \newline u_i(\ell,\gamma; a,b)\big]$,
carries out an extensive grid search, followed by the application of the MATLAB
zero-finding function fzero. This programming is designed to account for the possibility
of quite a large number of such intervals. However, careful investigations suggest that
$h(r,\ell,\gamma;a,b)$, considered as a function of $r$, is smooth,
leading typically to a single interval i.e. $K(\ell,\gamma; a,b)=1$.
A similar remark applies to the computation of the disjoint intervals
involved in computations of the computationally convenient approximations
for $c^*(\gamma; a^+)$ and $c^+(\gamma; a^+)$.

\bigskip

\noindent \underline{Derivation of a computationally convenient approximation for $c^*(\gamma; a^+)$}

\medskip

Similarly to the previous derivation of the computationally convenient approximation
for $c(\gamma; a,b)$, we observe that
\begin{equation*}
\left\| \aplsT \bX - \btheta \right\|^2 = \aplsT^2R^2 + 2\aplsT(\aplsT-1)\gamma R L + (\aplsT-1)^2\gamma^2,
\end{equation*}
and we define the function $h^+$ as follows.
\begin{equation*}
h^{+}(r,\ell,\gamma;a^+) = \sqrt{(\aplst)^2r^2 + 2\aplst(\aplst-1)\gamma r \ell + (\aplst-1)^2\gamma^2} - d,
\end{equation*}
where, as in the previous section, $t = \sqrt{(r^2 + 2\gamma r \ell + \gamma^2)/p}$.
By definition, $c^*(\gamma; a^+)$ is equal to
\begin{align}
\label{CP-B1-noTrunc}
&P\left( \| \aplsT \bX - \btheta \| \leq d \right) \notag \\
&= P\left( \sqrt{\aplsT^2 R^2 + 2\aplsT(\aplsT-1)\gamma R L + (\aplsT-1)^2\gamma^2} \leq d \right) \notag \\
&= \int_{0}^{\infty} \int_{-1}^{1} \! {\cal I} \big\{ \sqrt{\aplsT^2r^2 + 2\aplst(\aplst-1)\gamma r \ell + (\aplst-1)^2\gamma^2} \leq d \big\} \notag \\
& \ \ \ \ \ \ \ \ \ \ \ \ \ \ \ \ f_R(r) \, f_L(\ell) \, dr  \, d\ell. \notag \\
&= \int_{0}^{\infty} \int_{-1}^{1} \! {\cal I} \big\{ h^+(r,\ell,\gamma;a^+) \leq 0 \big\}
\, f_R(r) \, f_L(\ell)  \, d\ell \, dr
\end{align}
To compute this multiple integral, our next step is to truncate the outer integral. We approximate this multiple integral by
\begin{equation}
\label{CP-B1-1}
\int_{\subsup{l_r}}^{\subsup{u_r}}  \int_{-1}^{1} \! {\cal I} \big\{ h^+(r,\ell,\gamma;a^+) \leq 0 \big\}
\, f_R(r) \, f_L(\ell)  \, d\ell \, dr
\end{equation}
where $l_r$ and $u_r$ are given by \eqref{defn_l_r} and \eqref{defn_u_r}, respectively.
The following lemma provides an upper bound on the error of approximation.

\begin{lemma}
	\label{Lemma2}
	Let $e = \eqref{CP-B1-noTrunc} - \eqref{CP-B1-1}$. For $p \le 10$, $0 \le e \le \delta/2$. Also, for $p > 10$, $0 \le e \le \delta$.
\end{lemma}

\noindent The proof of this lemma is omitted, because it is very similar to the proof of Lemma 
\ref{Lemma1}.
Obviously, \eqref{CP-B1-1} is equal to
\begin{equation*}
\int_{-1}^{1} \left[ \int_{\subsup{l_r}}^{\subsup{u_r}} \! {\cal I} \big\{ h^+(r,\ell,\gamma;a^+) \leq 0 \big\}
\, f_R(r) \, dr \right]\, f_L(\ell) \, d\ell.
\end{equation*}
This is equal to
\begin{equation*}
c_{\text{approx}}^*(\gamma; a^+)
= \int_{-1}^{1} v^*(\ell,\gamma; a^+) \, f_L(\ell) \, d\ell,
\end{equation*}
where
\begin{equation*}
v^*(\ell,\gamma; a^+) = \displaystyle \int_{\subsup{l_r}}^{\subsup{u_r}} \! {\cal I} \big\{ h^+(r,\ell,\gamma;a^+) \leq 0 \big\}
\, f_R(r) \, dr.
\end{equation*}
Similarly to the previous derivation, a computationally convenient formula for $v^*(\ell,\gamma; a^+)$
is obtained as follows.
Suppose the set of values (if it exists) of $r \in [l_r,u_r]$ such that
$h^+(r,\ell,\gamma;a^+) \leq 0$, for given $\ell$, $\gamma$ and functions $a$ and $b$
is expressed in the form of a union of disjoint intervals as follows.
\begin{equation*}
\bigcup_{\subsup{i=1}}^{\subsup{K^*(\ell,\gamma; a^+)}} \Big[l^*_i(\ell,\gamma; a^+), u^*_i(\ell,\gamma; a^+)\Big],
\end{equation*}
where $K^*(\ell,\gamma; a^+)$ denotes the number of such disjoint intervals.
Thus
\begin{align*}
v^*(\ell,\gamma; a^+) =
\begin{cases}
\displaystyle \sum_{
	\subsup{i=1}}^{\subsup{K^*(\ell,\gamma; a^+)}} \int_{\subsup{l^*_i(\ell,\gamma; a^+)}}^{\subsup{u^*_i(\ell,\gamma; a^+)}}
\!  f_R(r) \, dr  \
& \text{if} \ \ K^*(\ell,\gamma; a^+) > 0 \\
0 \
& \text{if} \ \ K^*(\ell,\gamma; a^+)=0.
\end{cases}
\end{align*}
Since $R^2 \sim \chi^2_p$, this simplifies to the following. If $K^*(\ell,\gamma; a^+) > 0$ then
\begin{align*}
v^*(\ell,\gamma; a^+) =
\displaystyle \sum\limits_{\subsup{i=1}}^{\subsup{K^*(\ell,\gamma; a^+)}}
\Big[ F_p\left\{(u^{*}_i(\ell,\gamma; a^+))^2 \right\} - F_p\left\{(l^{*}_i(\ell,\gamma; a^+))^2 \right\} \Big]
\end{align*}
and if $K^*(\ell,\gamma; a^+)=0$ then 
$v^*(\ell,\gamma; a^+) = 0$.
Here, as in the previous section, $F_p$ denotes the $\chi^2_p$ distribution function.

\medskip

\noindent \underline{Derivation of a computationally convenient approximation for $c^+(\gamma; a^+)$}

\medskip

Note that $c^+(\gamma; a^+)$ is obtained by
simply replacing the function $a$ by
the function $a^+$ and the function $b$ by the constant $d$ in the expression for $c(\gamma; a,b)$.
Thus, by replacing the function $a$ by the function $a^+$ and the function $b$ by the constant $d$
in the expression for the computationally convenient approximation $c_{\text{approx}}(\gamma; a,b)$
for $c(\gamma; a,b)$,
we obtain the computationally convenient approximation $c_{\text{approx}}^+(\gamma; a^+)$ for
$c^+(\gamma; a^+)$.

\bigskip
The following theorem provides a computationally convenient expression for the coverage probability of $J(a,b)$
for $p\geq3$.
\begin{theorem}
	\label{TheoremA.1}
	Suppose that $p\geq3$.
	The coverage probability of $J(a,b)$ is equal to
	\begin{equation}
	\label{CP}
	c(\gamma; a,b) + c^*(\gamma; a^+) - c^+(\gamma; a^+).
	\end{equation}
	An approximation to this coverage probability is
	\begin{equation}
	\label{CP_approx}
	c_{\text{approx}}(\gamma; a,b) + c_{\text{approx}}^*(\gamma; a^+) - c_{\text{approx}}^+(\gamma; a^+),
	\end{equation}
	where the accuracy of this approximation is determined, through
	\eqref{defn_l_r}
	and
	\eqref{defn_u_r}, by the specified small positive number $\delta$.
	The error of approximation $\eqref{CP} - \eqref{CP_approx}$
	lies (a) between $-\delta/2$ and $\delta$, for $p \le 10$ and (b) between $-\delta$ and $2 \delta$, for
	$p > 10$.
\end{theorem}

\medskip

\noindent \textbf{Comparison of the two computationally convenient formulas for the coverage probability
	of $\boldsymbol{J(a,b)}$}

\medskip

Note that there
is a typographical error in this formula as stated on page 691 of Casella and Hwang (1983).
The $(n+1)!$ on the second line of (3.10) should be replaced by $(n+i)!$.
The main advantage of computationally convenient formula stated in Theorem  
\ref{TheoremA.2}
is that it is applicable
for any $p \ge 3$, whereas the computationally convenient formula stated by Casella and Hwang (1983, Section 3), is
applicable only for odd values of $p \ge 3$. We found that the coverage probability computed using the formula
of Casella and Hwang (1983, Section 3), was inaccurate when $\gamma = \|\btheta\|$ is very close to zero but not equal to zero.
There is no such problem with the formula for the coverage probability stated in Theorem \ref{TheoremA.1}.
In terms of computational speed, for small values of $p$, both of these computationally convenient formulas perform
equally well. However, for large values of $p$, the coverage probability can be computed faster
using the formula of Casella and Hwang (1983, Section 3).

\medskip

\noindent {\bf Numerical evaluation of the coverage probability of $\boldsymbol{J(a,b)}$
	using the new computationally convenient formula}

\medskip

The computer programs for the computation of the coverage probability using \eqref{CP_approx} were checked
for correctness in two ways, for some particular examples. Firstly,  the coverage probabilities
computed using the computationally convenient formula of Casella and Hwang (1983) for $p$ odd
and the computationally convenient formula \eqref{CP_approx} were compared.
Secondly, coverage probabilities computed using these formulas were compared with the results
of coverage probabilities computed using
Monte Carlo simulations.

\medskip

\noindent {\bf A computationally convenient formula
	for the scaled expected volume of $\boldsymbol{J(a,b)}$}

\medskip

The following theorem provides a computationally convenient-formula for the scaled expected volume
of the recentered confidence sphere $J(a,b)$.

\begin{theorem}
	\label{TheoremA.2}	
	For given function $b$, the scaled expected volume of $J(a,b)$ is a function of $\gamma = \ntheta$.
	
	\begin{enumerate}
		
		\item
		
		Let $f \big(v;p,\gamma^2 \big)$ denote the noncentral $\chi^2$ probability density function with $p$ degrees of freedom and noncentrality
		parameter $\gamma^2$, evaluated at $v$.
		The scaled expected volume of $J(a,b)$ is
		\begin{equation}
		\label{first_sev}
		\int_0^{\infty} \left\{ \frac{b \big(\sqrt{v/p} \big)}{d} \right\}^p f \big(v;p,\gamma^2 \big) \, dv.
		\end{equation}

		\item
		
		Suppose that $b(x)=d$ for all $x \ge k$, where $k$ is a specified positive number.
		The scaled expected volume of $J(a,b)$ is
		\begin{equation}
		\label{second_sev}
		\int_{\subsup{0}}^{\subsup{pk^2}} \left\{ \frac{b \big(\sqrt{v/p} \big)}{d} \right\}^p f \big(v;p,\gamma^2 \big) \, dv
		+ 1 - F \big(pk^2; p, \gamma^2 \big),
		\end{equation}
		where $F \big(v; p, \gamma^2 \big)$ denotes the noncentral $\chi^2$ cumulative distribution function
		with $p$ degrees of freedom and noncentrality
		parameter $\gamma^2$, evaluated at $v$.

	\end{enumerate}
	
\end{theorem}

\begin{proof}
	
	Note that $V = \|\bX\|^2 = X_1^2 + \dots + X_p^2$ has a noncentral $\chi^2$
	distribution with $p$ degrees of freedom and noncentrality
	parameter $\gamma^2$. It follows from \eqref{sev_initial}
	that the scaled expected volume of $J(a,b)$ is  \eqref{first_sev}.
	Clearly, \eqref{first_sev} is an even function of $\gamma$,
	for given function $b$.
	Suppose that $b(x)=d$ for all $x \ge k$, where $k$ is a specified positive number.
	Obviously,  \eqref{second_sev} follows immediately from \eqref{first_sev}.
	
\end{proof}

\medskip

\noindent {\bf Numerical evaluation of the scaled expected volume using the computationally convenient formula  \eqref{second_sev}}

\medskip

As stated in Section 
2, we suppose that the
function $b$ is a piecewise cubic Hermite interpolating polynomial
in the interval $[0,k]$,
with knots at $y_1, \dots, y_{q_2}$
($0 = y_1 < y_2 < \dots < y_{q_2} = k$) and that $b(x)=d$ for all $x \ge k$.
This function is very smooth between successive knots (it is a cubic
between these knots). However, it may not possess a second derivative at each of the knots. For this
reason, we numerically evaluate \eqref{second_sev} using
the formula
\begin{equation*}
\sum_{\subsup{i=1}}^{\subsup{q_2-1}} \int_{\subsup{p y_i^2}}^{\subsup{p y_{i+1}^2}}
\left\{ \frac{b \big(\sqrt{y/p} \big)}{d} \right\}^p f \big(y;p,\gamma^2 \big) \,  dy + 1 - F \big(pk^2; p, \gamma^2 \big),
\end{equation*}
where each integral is computed separately by numerical quadrature.
The computer programs for the computation of the scaled expected length using \eqref{second_sev} were checked
for correctness, for some particular examples, by comparison with the scaled expected length computed using Monte Carlo
simulation.

\bigskip

\noindent {\large{\bf Appendix B: Results for $\boldsymbol{\sigma}^2$ unknown}}

\medskip

In this appendix, we describe the method used for the numerical evaluation of
\eqref{CP1_unknown_sigma_sq},
the coverage probability of $\widetilde{J}(\widetilde{a},\widetilde{b})$. We also
derive a computationally convenient formula for the scaled expected volume
of $\widetilde{J}(\widetilde{a},\widetilde{b})$.  Suppose that $p\geq3$.
Let $\gamma = \| \boldsymbol{\vartheta} \| = \| \btheta  / \sigma \|$.

\medskip


\noindent {\bf Numerical evaluation of the coverage probability of $\boldsymbol{\widetilde{J}(\widetilde{a},\widetilde{b})}$
	using \eqref{CP1_unknown_sigma_sq}}

\medskip

The optimized RCS is found by numerically solving the constrained optimization problem described in Section 3.
This type of computation has been carried out in other related problems by
Farchione and Kabaila (2008, 2012) and Kabaila and Giri (2009, 2013). The main
lesson from these related computations is that the coverage probability needs to
be computed with great accuracy.

Let
\begin{equation*}
\psi \big(w, \gamma; \widetilde{a}, \widetilde{b} \big)
= P \left \{ \left \| \widetilde{a} \left( \frac{\|\bY\|}{\sqrt{p} \, w} \right) - \boldsymbol{\vartheta} \right \| \le w \, \widetilde{b} \left( \frac{\|\bY\|}{\sqrt{p} \, w} \right) \right\}.
\end{equation*}
Therefore, the formula \eqref{CP1_unknown_sigma_sq} for the coverage probability of $\widetilde{J}(\widetilde{a},\widetilde{b})$ is
\begin{equation*}
\int_0^{\infty} \psi \big(w, \gamma; \widetilde{a}, \widetilde{b} \big) \, f_W(w) \, dw.
\end{equation*}
We numerically evaluate this integral as follows. This integral is equal to
\begin{equation*}
\int_0^1 \psi \big(w, \gamma; \widetilde{a}, \widetilde{b} \big) \, f_W(w) \, dw
+ \int_1^{\infty} \psi \big(w, \gamma; \widetilde{a}, \widetilde{b} \big) \, f_W(w) \, dw.
\end{equation*}
We transform the variable of integration in the second integral from $w$ to $x = F_W(w)$, where $F_W$ denotes the
cumulative distribution function of $W$. Therefore, the coverage probability of $\widetilde{J}(\widetilde{a},\widetilde{b})$ is
\begin{equation*}
\int_0^1 \psi \big(w, \gamma; \widetilde{a}, \widetilde{b} \big) \, f_W(w) \, dw
+ \int_{F_W(1)}^1 \psi \big\{F_W^{-1}(x), \gamma; \widetilde{a}, \widetilde{b} \big\} \, dx,
\end{equation*}
where $\psi \big\{F_W^{-1}(x), \gamma; \widetilde{a}, \widetilde{b} \big\}$ evaluated at $x=1$ is defined to be the limit
as $x \uparrow 1$ of this function. This limit is 1. The integrands in both of these integrals are smooth.
These integrals were computed using Simpson's rule with an appropriately chosen fixed number of evaluations of the integrand.
To help ensure accurate computation of the integrands for both integrals, progressive
numerical quadrature, using Simpson's rule was used and a doubling of equal-length segments
at each stage of the progression is used.
Progressive numerical quadrature is described, for example, in Section 6.1 of Davis and Rabinowitz (1984).
The main stopping criterion is that $|Q_{2s} - Q_s|/Q_{2s} \le 10^{-8}$,
where $Q_s$ denotes the computed quadrature using $s$ segments.
The computer programs for the computation of the coverage probability were checked
for correctness by comparing computed values, for some particular examples, with the
results of coverage probabilities computed using
Monte Carlo simulations.

\medskip

\noindent {\bf A computationally convenient formula
	for the scaled expected volume of $\boldsymbol{\widetilde{J}(\widetilde{a},\widetilde{b})}$}

\medskip

The following theorem provides a computationally convenient formula for the scaled expected volume
of the RCS $\widetilde{J}(\widetilde{a},\widetilde{b})$.

\begin{theorem}
	
	For given function $\widetilde{b}$, the scaled expected volume of $\widetilde{J}(\widetilde{a},\widetilde{b})$
	is a function of $\gamma = \| \boldsymbol{\vartheta} \| = \| \btheta  / \sigma \|$.
	Let $f \big(v;p,\gamma^2 \big)$ and $F \big(v; p, \gamma^2 \big)$ denote the probability density function and the cumulative distribution function, respectively,
	of the noncentral $\chi^2$ distribution with $p$ degrees of freedom and noncentrality parameter $\gamma^2$, evaluated at $v$.
	The scaled expected volume
	of $\widetilde{J}(\widetilde{a},\widetilde{b})$ is
	equal to 1 plus
	\begin{align}
	\label{SEVSigmaUnknown}
	1 +
	\mu^{-1} \int_{0}^{\infty}  \Bigg (\int_{0}^{\pkstyle{p\, k^2 w^2}} \,\left [\frac{\widetilde{b}\left\{ v \big/ (\sqrt{p}\,w ) \right\}}{d} \right]^{p}
	&f(v; p,\gamma^2) \,  dv - \, F(p\, k^2 w^2; p,\gamma^2) \Bigg) 
	\notag
	\\
	&\times w^{p} \, f_W(w) \, dw,
	\end{align}
	%
	%
	%
	where $\mu=\displaystyle{(2/m)^{p/2} \,\, \Gamma\big\{ (p+m)/2 \big\} \big/ \Gamma\left( m/2 \right)}$.
	
\end{theorem}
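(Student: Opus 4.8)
The plan is to reduce the ratio of expected volumes to a single expectation over the scale-free pair $(\bY, W)$, then condition on $W$ and recognize the remaining $\bY$-expectation as an integral against the noncentral $\chi^2$ density, finishing with an algebraic rearrangement that peels off a leading constant.

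First I would use that $\widetilde{J}(\widetilde{a},\widetilde{b})$ is a sphere of radius $S\,\widetilde{b}(\widetilde{T})$ and $\widetilde{I}$ a sphere of radius $\widetilde{d}\,S$, so that the common factor $2\pi^{p/2}/\{p\,\Gamma(p/2)\}$ cancels in the ratio and the scaled expected volume equals $E_{\btheta,\sigma}\{S^p\widetilde{b}^p(\widetilde{T})\}\big/\{\widetilde{d}^p E_{\btheta,\sigma}(S^p)\}$, that is, \eqref{sev_initial_unknown_sigma_squared}. Writing $W=S/\sigma$, $\bY=\bX/\sigma$ and $\boldsymbol{\vartheta}=\btheta/\sigma$, the factor $\sigma^p$ cancels between numerator and denominator and $\widetilde{T}=\|\bX\|/(\sqrt{p}\,S)=\|\bY\|/(\sqrt{p}\,W)$. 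Since $\bY\sim N(\boldsymbol{\vartheta},\boldsymbol{I})$ with $\|\boldsymbol{\vartheta}\|=\gamma$ and the distribution of $W$ is free of parameters, this already exhibits the scaled expected volume as a function of $\gamma$ alone, which is the first assertion.

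Next I would identify the denominator constant. Because $W$ has the distribution of $\sqrt{Q/m}$ with $Q\sim\chi^2_m$, the chi-squared moment $E(Q^{p/2})=2^{p/2}\Gamma\{(p+m)/2\}/\Gamma(m/2)$ gives $E(W^p)=(2/m)^{p/2}\Gamma\{(p+m)/2\}/\Gamma(m/2)=\mu$. Using the independence of $\bY$ and $W$, I would condition on $W$ to write the scaled expected volume as $\mu^{-1}\int_0^\infty w^p\,E_{\bY}\!\big[\{\widetilde{b}(\|\bY\|/(\sqrt{p}\,w))/\widetilde{d}\}^p\big]\,f_W(w)\,dw$. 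Since $V=\|\bY\|^2$ is noncentral $\chi^2$ with $p$ degrees of freedom and noncentrality $\gamma^2$, the inner expectation becomes $\int_0^\infty\{\widetilde{b}(\sqrt{v}/(\sqrt{p}\,w))/\widetilde{d}\}^p f(v;p,\gamma^2)\,dv$. The final step invokes Condition $\widetilde{\text{B}}$, that $\widetilde{b}$ equals the constant $\widetilde{d}$ beyond its last knot: the threshold $\sqrt{v}/(\sqrt{p}\,w)=k$ occurs at $v=p\,k^2w^2$, so I would split the inner $v$-integral there; on $[p\,k^2w^2,\infty)$ the integrand reduces to $f(v;p,\gamma^2)$ and contributes $1-F(p\,k^2w^2;p,\gamma^2)$. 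Rewriting the resulting "$+1$" as $\mu^{-1}\int_0^\infty w^p f_W(w)\,dw=1$ and moving it outside the integral then delivers the form \eqref{SEVSigmaUnknown}.

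There is no deep obstacle here; the argument is careful bookkeeping. The step most prone to error is the scale reduction in the second paragraph, where one must verify that the $\sigma^p$ factors cancel exactly and that $\widetilde{T}$ is genuinely scale-free, since this is precisely what forces dependence on $\gamma$ alone. The other delicate point is the rearrangement at the end: separating the leading $1$ cleanly relies on the identity $E(W^p)=\mu$ and on taking the split at exactly $v=p\,k^2w^2$, so that the tail contribution of the constant radius $\widetilde{d}$ collapses to $1-F(p\,k^2w^2;p,\gamma^2)$.
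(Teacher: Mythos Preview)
Your proposal is correct and follows essentially the same route as the paper: reduce to scale-free variables $(\bY,W)$, compute $E(W^p)=\mu$ via the chi-squared moment, express the numerator as a double integral in $(v,w)$ with $V=\|\bY\|^2$ noncentral $\chi^2_p(\gamma^2)$, split the inner integral at $v=p\,k^2w^2$ using Condition~$\widetilde{\text{B}}$, and peel off the leading $1$ from the tail piece via $\mu^{-1}\int_0^\infty w^p f_W(w)\,dw=1$. Your write-up even silently repairs a typographical slip in the displayed formula, correctly using $\sqrt{v}/(\sqrt{p}\,w)$ as the argument of $\widetilde{b}$ rather than $v/(\sqrt{p}\,w)$, which is what makes the split at $v=p\,k^2w^2$ consistent.
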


\begin{proof}
	
	Our proof proceeds from the expression \eqref{sev_initial_unknown_sigma_squared}
	for the scaled expected volume of $\widetilde{J}(\widetilde{a},\widetilde{b})$.
	Since $W$ has the same distribution as $\sqrt{Q/m}$ where $Q \sim \chi^2_m$,
	\begin{align*}
	E \left( W^{p} \right) & = E \left\{ (Q/m)^{p/2} \right\} \\
	& = \frac{1}{m^{p/2}} \,\, E \left( Q^{p/2} \right)   \\
	& = (2/m)^{p/2} \,\, \Gamma\left( \frac{p+m}{2} \right) \Big/ \Gamma\left( \frac{m}{2}\right).
	\end{align*}
	Let $\mu=\displaystyle{(2/m)^{p/2} \,\, \Gamma\left( (p+m)/2 \right) \big/ \Gamma\left( m/2 \right)}$.
	Thus \eqref{sev_initial_unknown_sigma_squared} is equal to
	\begin{align}
	\label{sevSigUnknown_2-1}
	\mu^{-1} \times E_{\btheta} \left( W^{p} \,\, \left[\frac{\widetilde{b}\left\{ ||\bY|| \big/ (\sqrt{p}\,W )\right\}}{d} \right]^{p}  \right),
	\end{align}
	where $\bY = \bX/\sigma$, so that
	$\bY \sim N(\boldsymbol{\vartheta}, \boldsymbol{I})$ for $\boldsymbol{\vartheta} = \btheta/\sigma$.
	Let $V=||\bY||^2$. Note that $V$ has a noncentral $\chi^2$ distribution
	with $p$ degrees of freedom and noncentrality parameter $\gamma^2$.
	Thus, \eqref{sevSigUnknown_2-1} is equal to
	\begin{align}
	\label{sevSigUnknown_3}
	\mu^{-1} \int_{0}^{\infty}  \int_{0}^{\infty} w^{p} \left[\frac{\widetilde{b}\left\{ v \big/ (\sqrt{p}\,w) \right\}}{d} \right]^{p}
	f(v; p,\gamma^2) f_W(w) \, dv \, dw.
	\end{align}

	By Condition $\widetilde{\text{B}}$,
	$\widetilde{b}(x)=d$ for all $x \geq k$. Note that
	$v \big/ \sqrt{p}\,w \geq k$ is equivalent to $v \geq p\, k^2 w^2 $.
	Therefore, \eqref{sevSigUnknown_3} is equal to
	\begin{align*}
	& \mu^{-1} \int_{0}^{\infty}  \int_{0}^{\pkstyle{p\, k^2 w^2}} w^{p} \left[\frac{\widetilde{b}\left\{ v \big/ (\sqrt{p}\,w )\right\}}{d} \right]^{p}
	f(v; p,\gamma^2) f_W(w) \, dv \, dw   \notag \\
	%
	& + \mu^{-1} \int_{0}^{\infty}  \int_{\pkstyle{p\, k^2 w^2}}^{\infty} w^{p}
	f(v; p,\gamma^2) \, f_W(w) \, dv \, dw.
	\end{align*}
	The second term in this expression is equal to
	\begin{align*}
	&\mu^{-1} \int_{0}^{\infty}  \left\{ \int_{\pkstyle{p\, k^2 w^2}}^{\infty}  f(v; p,\gamma^2) \, dv \right\} \, w^{p} \, f_W(w) \, dw \notag\\
	& = \mu^{-1} \int_{0}^{\infty}  \left\{ 1 - F(p\, k^2 w^2; p,\gamma^2) \right\} w^{p} f_W(w) \, dw \notag\\
	& = \mu^{-1} \int_{0}^{\infty} w^{p} f_W(w) \, dw - \mu^{-1} \int_{0}^{\infty}  F(p\, k^2 w^2; p,\gamma^2) \, w^{p} \, f_W(w) \, dw \notag\\
	& = 1 - \mu^{-1} \int_{0}^{\infty}  F(p\, k^2 w^2; p,\gamma^2) \, w^{p} \, f_W(w) \, \mathrm{d} w,
	\end{align*}
	Thus the scaled expected volume
	of $\widetilde{J}(\widetilde{a},\widetilde{b})$ is is equal to \eqref{SEVSigmaUnknown}.
	
\end{proof}

\medskip

\noindent {\bf Numerical evaluation of the scaled expected volume using the computationally convenient formula  \eqref{SEVSigmaUnknown}}

\medskip

To compute the scaled expected volume using \eqref{SEVSigmaUnknown}, we truncate the outer integral.
As before, let $\mu=\displaystyle{(2/m)^{p/2} \,\, \Gamma\left( (p+m)/2 \right) \big/ \Gamma\left( m/2 \right)}$.
We approximate
\eqref{SEVSigmaUnknown} by
\begin{align}
\label{SEVSigmaUnknown_truncated}
1 +
\mu^{-1} \int_{\pkstyle{l_w}}^{\pkstyle{u_w}}  \Bigg (\int_{0}^{\pkstyle{p\, k^2 w^2}} \,\left [\frac{\widetilde{b}\left\{ v \big/ (\sqrt{p}\,w ) \right\}}{d} \right]^{p}
&f(v; p,\gamma^2) \,  dv - \, F(p\, k^2 w^2; p,\gamma^2) \Bigg) 
\notag
\\
&\times w^{p} \, f_W(w) \, dw,
\end{align}
where, for a specified small positive number $\delta$,
\begin{equation*}
l_w =
\begin{cases}
0 & \text{for} \ \ \ \ m+p \le 10 \\
\sqrt{F_{m+p}^{-1} \{\delta/(2 \mu)\}} & \text{for} \ \ \ \ m+p > 10
\end{cases}
\end{equation*}
and
\begin{equation*}
u_w = \sqrt{F_{m+p}^{-1}\{1-\delta/(2 \mu) \}}
\end{equation*}
and $F_{m+p}$ denotes the $\chi^2_{m+p}$ cumulative distribution function. The following lemma provides an upper bound on the error of approximation.

\begin{lemma}
	Let $e = \eqref{SEVSigmaUnknown} - \eqref{SEVSigmaUnknown_truncated}$. For $m+p \le 10$, $0 \le e \le \delta/2$. Also, for $m+p > 10$, $0 \le e \le \delta$.
\end{lemma}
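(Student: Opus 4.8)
The plan is to follow the template already used for Lemma 1. Subtracting \eqref{SEVSigmaUnknown_truncated} from \eqref{SEVSigmaUnknown}, the common leading $1$ and the prefactor $\mu^{-1}$ survive, so $e$ is exactly $\mu^{-1}$ times the integral of the outer integrand over the two excluded tails $[0,l_w)$ and $(u_w,\infty)$. Writing $B(w)=\int_{0}^{p k^2 w^2} [\widetilde{b}\{\,\cdot\,\}/d]^{p} f(v;p,\gamma^2)\,dv - F(p k^2 w^2;p,\gamma^2)$ for the bracketed factor and $\Phi(w)=B(w)\,w^{p} f_W(w)$ for the full outer integrand, I would first record that $e=\mu^{-1}\big(\int_{0}^{l_w}+\int_{u_w}^{\infty}\big)\Phi(w)\,dw$. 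Everything then reduces to controlling $\Phi$ on the two tails.

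Next I would bound $B(w)$ uniformly. By Condition $\widetilde{\text{B}}$, $\widetilde{b}$ is nondecreasing with $\widetilde{b}\equiv d$ on $[k,\infty)$, so $0\le \widetilde{b}(x)/d\le 1$ for every $x\le k$. On the range $v< p k^2 w^2$ of the inner integral the argument of $\widetilde{b}$ is less than $k$ — this is the very equivalence used to obtain \eqref{SEVSigmaUnknown} — so the inner integrand $[\widetilde{b}/d]^{p} f(v;p,\gamma^2)$ is nonnegative and dominated by $f(v;p,\gamma^2)$. Hence $0\le\int_{0}^{p k^2 w^2}[\widetilde{b}/d]^{p} f\,dv\le F(p k^2 w^2;p,\gamma^2)$, which pins $B(w)$ into $[-F(p k^2 w^2;p,\gamma^2),0]\subseteq[-1,0]$. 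In particular $|\Phi(w)|\le w^{p} f_W(w)$, the sign of $e$ is inherited from that of $B$, and the magnitude satisfies $|e|\le \mu^{-1}\big(\int_{0}^{l_w}+\int_{u_w}^{\infty}\big) w^{p} f_W(w)\,dw$, which is the exact two-tail structure of Lemma 1.

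The crux is to recognize $\mu^{-1} w^{p} f_W(w)$ as a genuine probability density. Since $\mu=E(W^{p})$ — as computed in the derivation of \eqref{SEVSigmaUnknown} — this is the $w^{p}$ size-biased density of $W$, and because $W$ has the distribution of $\sqrt{Q/m}$ with $Q\sim\chi^2_m$, size-biasing by $w^{p}$ shifts the degrees of freedom by $p$: the size-biased law is that of $\sqrt{\widetilde{Q}/m}$ with $\widetilde{Q}\sim\chi^2_{m+p}$. This is precisely why $F_{m+p}$ and the index $m+p$ appear in the definitions of $l_w$ and $u_w$. Under this probability measure the two tail integrals become lower and upper $\chi^2_{m+p}$ tail probabilities, and $l_w$, $u_w$ are chosen so that each contributes at most $\delta/2$. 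For $m+p>10$ both tails are present, giving $|e|\le\delta/2+\delta/2=\delta$; for $m+p\le 10$ one sets $l_w=0$, the lower tail drops out, and $|e|\le\delta/2$.

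I expect the main obstacle to be this size-biasing identification and the bookkeeping it entails: one must verify that $\mu^{-1}w^{p}f_W(w)$ integrates to one and coincides with the $\chi^2_{m+p}$-based density, and then confirm that the stated thresholds $l_w$, $u_w$ really do cut each tail at probability $\delta/2$ (tracking the degrees-of-freedom shift from $m$ to $m+p$ is where a miscalculation would most easily creep in). By contrast the uniform bound $|B(w)|\le 1$ is routine once Condition $\widetilde{\text{B}}$ and the domination by $f(\cdot;p,\gamma^2)$ are in hand, exactly as the bound $\int_{-1}^{1}f_L\,d\ell=1$ was routine in the proof of Lemma 1.
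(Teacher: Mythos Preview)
The paper omits its own proof of this lemma, so there is nothing to compare against directly; your plan---transplanting the Lemma~1 template and recognising $\mu^{-1}w^{p}f_{W}(w)$ as the density of $\sqrt{\widetilde Q/m}$ with $\widetilde Q\sim\chi^{2}_{m+p}$---is sound and is almost certainly what the authors intended.

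One thing your own analysis uncovers but you do not flag explicitly: since Condition~$\widetilde{\text B}$ forces $\widetilde b\le d$ on $[0,k]$, you correctly obtain $B(w)\le 0$, and hence $e\le 0$. This \emph{contradicts} the lemma's assertion that $0\le e$. The stated sign has evidently been carried over verbatim from Lemmas~1 and~2, where the truncated integrand was nonnegative; here it is nonpositive, so the correct conclusion is $-\delta/2\le e\le 0$ (respectively $-\delta\le e\le 0$). Your bound on $|e|$ is right and is the substantive content, but you should say outright that the inequality as printed has its sign reversed rather than leaving it at ``the sign of $e$ is inherited from that of $B$''.
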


\noindent The proof of this lemma is omitted for the sake of brevity.

\bigskip


\noindent {\large \textbf{References}}

\smallskip

\rf Abeysekera, W. (2014) New Recentered Confidence Spheres for the Multivariate Normal Mean. Unpublished PhD thesis, Department of Mathematics and Statistics, La Trobe University.

\smallskip

\rf Baranchik, A. (1970) A family of minimax estimators of the mean of a multivariate normal
  distribution. {\sl The Annals of Mathematical Statistics}, \textbf{41}, 642--645.

\smallskip

\rf Berger, J. (1980) A robust generalized Bayes estimator and confidence region
for a multivariate normal mean. {\sl Annals of Statistics}, \textbf{8}, 716--761.





\smallskip

\rf Casella, G. and Hwang, J.T. (1983) Empirical Bayes confidence sets for the mean of a multivariate
normal distribution. {\sl Journal of the American Statistical Association}, \textbf{78}, 688--698.

\smallskip

\rf Casella, G. and Hwang, J.T. (1986) Confidence sets and the Stein effect. {\sl Communications in Statistics --
Theory and Methods}, \textbf{15}, 2043--2063.

\smallskip

\rf Casella, G. and Hwang, J.T. (1987) Employing vague prior information in the construction of
confidence sets. {\sl Journal of Multivariate Analysis}, \textbf{21}, 79--104.

\smallskip

\rf Casella, G., Hwang, J.T. and 
Robert, C. (1993) A paradox in decision-theoretic interval estimation. \textsl{Statistica Sinica} \textbf{3}, 141--155.

\smallskip

\rf Casella, G. and Hwang, J.T. (2012) Shrinkage
confidence procedures. {\sl Statistical Science}, \textbf{27}, 51--60.

\smallskip

\rf Davis, P.J. and Rabinowitz, P. (1984) Methods of Numerical Integration, 2nd ed. Academic Press, San Diego, CA.

\smallskip

\rf Efron, B. (2006) Minimum volume confidence regions for a multivariate normal mean vector.
{\sl Journal of the Royal Statistical Society, Series B}, \textbf{68}, 655--670.

\smallskip

\rf Farchione, D. and Kabaila, P. (2008) Confidence intervals for the normal mean
utilizing prior information. {\sl Statistics \& Probability Letters}, \textbf{78}, 1094--1100.

\smallskip

\rf Farchione, D. and Kabaila, P. (2012) Confidence intervals in regression centred
on the SCAD estimator. {\sl Statistics \& Probability Letters}, \textbf{82}, 1953--1960.

\smallskip

\rf Faith, R.E. (1976)
Minimax Bayes point and set estimators of a multivariate normal mean.
Unpublished PhD thesis, Department of Statistics, University of Michigan.

\smallskip

\rf Fritsch, F.N. and Carlson, R.E. (1980) Monotone piecewise cubic interpolation. {\sl SIAM J. Numerical Analysis} \textbf{17}, 238--246.

\smallskip

\rf Hodges, J.L. and Lehmann, E.L. (1952) The use of previous experience in reaching statistical decisions. {\sl Annals of Mathematical Statistics} \textbf{23}, 396--407.

\smallskip

\rf Hwang, J.T.  and  Casella, G.(1982) Minimax
confidence sets for the mean of a multivariate normal distribution.
{\sl Annals of Statistics}, \textbf{10}, 868--881.

\smallskip

\rf James, W. and Stein, C. (1961) Estimation with quadratic loss. {\sl In
Proceedings of the Fourth Berkeley Symposium on Mathematical
  Statistics and Probability}, Vol. 1, pp. 361--379. Berkeley: University of California Press.



\smallskip

\rf Kabaila, P. (2013) Note on a paradox in decision-theoretic interval estimation.
{\sl Statistics \& Probability Letters}, \textbf{83}, 123--126.

\smallskip

\rf Kabaila, P. and Giri, K. (2009) Confidence intervals in regression utilizing
uncertain prior information. {\sl Journal of Statistical Planning and Inference},
\textbf{139}, 3419--3429.



\smallskip

\rf Kabaila, P. and Giri, K. (2013) Further properties of frequentist confidence intervals
in regression that utilize
uncertain prior information. {\sl Australian \& New Zealand Journal of Statistics},
\textbf{55}, 259--270.

\smallskip

\rf Kabaila, P. and Tissera, D. (2014) Confidence intervals
in regression that utilize uncertain prior information about a vector parameter. \textsl{Australian \& New Zealand Journal of Statistics}
\textbf{56}, 371--383.

\smallskip

\rf Samworth, R. (2005) Small confidence sets for the mean of a spherically
symmetric distribution. {\sl Journal of the Royal Statistical Society, Series B},
\textbf{67}, 343--361.

\smallskip

\rf Shinozaki, N. (1989) Improved confidence sets for the mean of a multivariate
normal distribution. {\sl Annals of the Institute of Mathematical Statistics}, \textbf{41},
331--346.

\smallskip

\rf Stein, C. (1962) Confidence sets for the mean of a multivariate normal
distribution.
{\sl Journal of the Royal Statistical Society, Series B}, \textbf{9}, 1135--1151.

\smallskip

\rf Tseng, Y.L. and Brown, L.D. (1997) Good exact confidence sets for a multivariate normal mean.
{\sl Annals of Statistics}, \textbf{5}, 2228--2258.

\end{document}